 \newtheorem{theorem}{Theorem}[section]
 \newtheorem{proposition}[theorem]{Proposition}
 \newtheorem{corollary}[theorem]{Corollary}
\newtheorem{claim}{Claim}
\newcommand{\pr}{\mathbb P}
\newcommand{\alg}{\mathit{ALG}}
\definecolor{TUMBlue}{HTML}{0065BD}
\definecolor{TUMGray}{HTML}{808080}
\definecolor{TUMGreen}{HTML}{A2AD00} 
\newcommand{\cmark}{\ding{51}}%
\newcommand{\xmark}{\ding{55}}%
\begin{document}

\title{Stability in Online Coalition Formation}

\author[1]{Martin Bullinger}
\author[2]{Ren{\'e} Romen}

\affil[1]{\small Department of Computer Science, University of Oxford}
\affil[2]{\small School of Computation, Information and Technology, Technical University of Munich
\protect\\ \vspace*{0.05cm}
martin.bullinger@cs.ox.ac.uk, rene.romen@tum.de}

\date{}

\maketitle

\begin{abstract}
Coalition formation is concerned with the question of how to partition a set of agents into disjoint coalitions according to their preferences.
Deviating from most of the previous work, we consider an online variant of the problem, where agents arrive in sequence.
Whenever an agent arrives, they must be assigned to a coalition immediately and irrevocably.
The scarce existing literature on online coalition formation has focused on maximizing social welfare, a demanding requirement, even in the offline setting.
Instead, we seek to achieve \emph{stable} coalition structures online and treat the most common stability concepts based on deviations by single agents and groups of agents.
We present a comprehensive picture in additively separable hedonic games, leading to dichotomies, where positive results are obtained by deterministic algorithms and negative results even hold for randomized algorithms. 
\end{abstract}

\section{Introduction}\label{sec:intro}

The formation of stable coalition structures is an important concern in multi-agent systems.
The critical question is how to partition a set of agents into reasonable coalitions.
A standard framework for this is the consideration of hedonic games \citep{DrGr80a}.
In these games, a set of agents expresses their preferences over subsets of agents containing themselves, i.e., their potential coalitions.
The output is a coalition structure where all agents are assigned to a unique coalition.
In our work, we consider additively separable hedonic games, one of the most prominent classes of hedonic games, where cardinal utilities for single agents encode the preferences, and a sum-based aggregation defines the utility for a coalition.

Hedonic games have been used to model various aspects of group interaction, such as the formation of research teams \citep{AlRe04a} or the detection of communities \citep{ABB+17a}.
A commonality of most research on hedonic games is that the focus is on a single game, which is fully specified and for which a desirable outcome is searched.
However, this misses an important feature of many real-life scenarios:
Agents might arrive over time and have to be assigned to existing coalitions.
For instance, in a company, most workers are already assigned to a department or team, and new hires join existing teams.

Based on such considerations, \citet{FMM+21a} introduced an online variant of hedonic games that adds the arrival of agents over time.
Their goal is to achieve high social welfare, and they find that a greedy algorithm performs best in a competitive analysis.
However, this algorithm has an unbounded competitive ratio if the number of agents or the utility range is unbounded. 
In subsequent work, \citet{BuRo23a} show that it is possible to get rid of utility dependencies of the competitive ratio 
under certain model assumptions, e.g., a uniformly random arrival of agents.
They achieve a competitive ratio of $\Theta\left(n\right)$, which is essentially the best approximation guarantee that we can hope for by efficient algorithms because, for every $\epsilon > 0$, it is \NP-complete to approximate social welfare by a factor of ${n^{1-\epsilon}}$ \cite[Theorem~17]{FKV22a}.\footnote{This result even holds for the class of aversion-to-enemies games that we will introduce and investigate later.}

By contrast, the scarce existing literature on online coalition formation 
omits other common objectives in coalition formation.
Stability probably is the most studied solution concept for hedonic games in general and in additively separable hedonic games in particular \citep[see, e.g.,][]{BoJa02a,SuDi10a,ABS11c,Woeg13a,GaSa19a,BBT22a,BBW21b,Bull22a}.
In our work, we close this research gap and consider the question of whether notions of stability can be achieved in an online manner.

We cover a broad range of the most common stability concepts for hedonic games based on deviations by single agents and groups of agents.
Specifically, we consider Nash stability, individual stability, contractual Nash stability, contractual individual stability as well as the core and the strict core.
In addition, we study Pareto optimality, which is particularly interesting because it is a natural weakening of the demanding objective of maximizing social welfare while it can still be interpreted as a notion of stability \citep{Morr10a}.
We present a comprehensive picture of the capabilities and impossibilities of online algorithms aiming to compute stable partitions.
Note that for some of the solution concepts, e.g., Nash stability, no partition satisfies the stability notion in some instances.
This also implies trivial impossibilities for the online setting.
Therefore, we consider natural utility restrictions, such as symmetry or the distinction of friends and enemies, a natural approach that has been thoroughly explored in coalition formation settings \citep{DBHS06a,OBI+17a,BBT22a,FKV22a,KNR+22a}.
Within the framework of additively separable hedonic games, this is in particular captured in the subclasses of appreciation-of-friends games (AFGs) and aversion-to-enemies games (AEGs) \citep{DBHS06a}.
Our findings are summarized in \Cref{tab:overview}.

\begin{table}
	\caption{Computability of stable 
	partitions by online algorithms; for definitions of stability concepts and utility restrictions, see \Cref{sec:prelims}. A checkmark (\cmark) means a deterministic online algorithm can compute the desired partition. 
	A cross (\xmark) means that no randomized online algorithm exists that outputs the desired partition with probability bounded away from $0$.
	All negative results hold even for the case of symmetric games.
	Of the positive results (highlighted in gray), only the results for contractual Nash stability need symmetry.}
	\label{tab:overview}
\centering
\resizebox{1\textwidth}{!}{
	\renewcommand{\arraystretch}{1.2}
\begin{tabular}{l m{1.75cm} m{1.75cm} m{1.75cm} m{1.75cm} m{1.75cm}}
\toprule
Utility restriction of ASHG & strict & FENG & FEG & AFG & AEG \\ [0.3ex]
Allowed utility values & $\mathbb{Q} \setminus \{0\}$ & $\{1,0,-1\}$ & $\{1, -1\}$ & $\{n, -1\}$ & $\{1,-n\}$ \\ [0.3ex] 
\midrule
Nash stability   & \xmark (Th.~\ref{thm:noISgen}) & \xmark (Th.~\ref{thm:noISgen}) & \xmark (Th.~\ref{thm:noISgen}) & \xmark (Th.~\ref{thm:noISgen}) & \xmark (Th.~\ref{thm:noISgen}) \\
Individual stability   & \xmark (Th.~\ref{thm:noISgen}) & \xmark (Th.~\ref{thm:noISgen}) & \xmark (Th.~\ref{thm:noISgen}) & \xmark (Th.~\ref{thm:noISgen}) & \xmark (Th.~\ref{thm:noISgen}) \\
Contractual Nash stability  & \xmark (Th.~\ref{thm:CNSImpossible}) & \xmark (Th.~\ref{thm:CISImpossible}) & \cellcolor{gray!10} \cmark (Th.~\ref{thm:AlgCNS}) & \xmark (Th.~\ref{thm:CNSImpossible}) &\cellcolor{gray!10} \cmark (Th.~\ref{thm:AlgCNS})\\
Contractual individual stability  & \cellcolor{gray!10} \cmark (Cor.~\ref{cor:strictCIS})& \xmark (Th.~\ref{thm:CISImpossible}) & \cellcolor{gray!10} \cmark (Cor.~\ref{cor:strictCIS}) & \cellcolor{gray!10} \cmark (Cor.~\ref{cor:strictCIS}) & \cellcolor{gray!10} \cmark (Cor.~\ref{cor:strictCIS})\\
Strict core stability  & \xmark (Th.~\ref{thm:core}) & \xmark (Th.~\ref{thm:core}) & \xmark (Th.~\ref{thm:core}) & \xmark (Th.~\ref{thm:coreAFG}) & \xmark (Th.~\ref{thm:core}) \\
Core stability  & \xmark (Th.~\ref{thm:core}) & \xmark (Th.~\ref{thm:core}) & \xmark (Th.~\ref{thm:core}) & \xmark (Th.~\ref{thm:coreAFG}) & \xmark (Th.~\ref{thm:core}) \\
Pareto optimality   & \cellcolor{gray!10} \cmark (Th.~\ref{thm:algPOstrict})& \xmark (Th.~\ref{thm:CISImpossible}) & \cellcolor{gray!10} \cmark (Th.~\ref{thm:algPOstrict}) & \cellcolor{gray!10} \cmark (Th.~\ref{thm:algPOstrict}) & \cellcolor{gray!10} \cmark (Th.~\ref{thm:algPOstrict})\\
\bottomrule
\end{tabular}
} 
\end{table}

A large part of our results is negative and proves the nonexistence of randomized algorithms capable of computing stable coalition structures under strong utility restrictions.
This is even the case for solution concepts like Pareto optimality or contractual individual stability, for which solutions are guaranteed to exist in any hedonic game \citep{AzSa15a}. 
In fact, all our negative results only use games in which coalition structures satisfying the considered solution concept are guaranteed to exist.
By contrast, we obtain deterministic online algorithms capable of computing contractually Nash-stable and Pareto-optimal coalition structures in restricted classes of games. 
While such positive results seem rare, they entail very strong stability guarantees.
The associated algorithms do not only output a final stable coalition structure, but they \emph{maintain} stability throughout the entire arrival process of agents. Otherwise, they would fail their promised guarantee on a partial instance.
Hence, these algorithms suit every application with an indefinite time horizon, where new agents can arrive continuously.

\section{Related Work}

Our work contributes to two streams of work: the consideration of coalition formation models in economic theory and, more recently, the AI literature as well as the investigation of online algorithms in related settings, mostly in theoretical computer science.
Here, we give an account of both.

Coalition formation in the framework of hedonic games was first studied by \citet{DrGr80a} and popularized two decades later \citep{BKS01a,CeRo01a,BoJa02a}.
\citet{BoJa02a} introduced additively separable hedonic games (ASHGs), which have since been an ongoing object of study.
\citet{AzSa15a} present a survey of this stream of work.
The majority of the research on ASHGs considers the offline setting.
It focuses on the computational complexity of stability concepts \citep{DBHS06a,Olse09a,SuDi10a,ABS11c,Woeg13a,GaSa19a,FMM+21a,BBT22a,Bull22a}, but some more recent studies also consider economic efficiency in the sense of Pareto optimality \citep{EFF20a,Bull19a}, or strategyproofness \citep{FKMZ21a}.
Most important to our work, \citet{DBHS06a} and \citet{BBT22a} consider stability in succinct classes of hedonic games based on the distinction of friends and enemies, and the previously cited work settles the complexity of many single-agent stability notions (including all of the notions we consider here) in the offline setting.
Moreover, \citet{Bull19a} presents a polynomial-time algorithm to compute Pareto-optimal partitions for symmetric ASHGs. 

As we mentioned in the introduction, online ASHGs have been introduced by \citet{FMM+21a} and subsequently been studied by \citet{BuRo23a}. 
Moreover, \citet{PSST22a} study online hypergraph matching. 
Their model can be interpreted as coalition formation with bounded coalition sizes.
In contrast to \citet{FMM+21a} and \citet{BuRo23a}, the work by \citet{PSST22a} does not assume additively separable utilities, and agents do not have to be matched immediately at arrival.
However, they depart unmatched after a fixed time.
All three works solely consider the maximization of social welfare or the minimization of total cost.

In addition, a recent series of work considers deviation dynamics for hedonic games, which constitute another time-dependent model of coalition formation \citep[see, e.g.,][]{BFFMM18a,BBW21b,CMM19a}.
In particular, ASHGs and close variants are studied in depth \citep{BMM22a,BBK23a,BBT22a,BuSu23a}.

From the literature on online algorithms, online matching is the most related to our setting, which originates from the seminal paper by \citet{KVV90a}.
\citet{HuTr22a} survey this line of work.
Matchings can be seen as a variant of hedonic games, where coalitions are restricted to size at most~$2$.
Different to our work, the input instances are bipartite, and only one side of the agents appears online.
The objective in online matching is usually to find a matching of maximum cardinality or weight.
\citet{KVV90a} introduce the famous ranking algorithm, which achieves a competitive ratio of $1-\nicefrac{1}{e}$.
Subsequent work considers related models with edge weights, all agents arriving online, or non-bipartite matching \citep{FKM+09a,WaWo15a,HKT+18a,EFGT22a}.
While it is possible to achieve the competitive ratio of $1-\nicefrac{1}{e}$ in the weighted setting \citep{FKM+09a}, this is usually impossible if all agents arrive online \citep{WaWo15a,HKT+18a}.
Closest to our setting is the work by \citet{EFGT22a}.
They show an optimal bound of $\nicefrac{5}{12}$ for maximum weight matching in general graphs in the online random arrival setting and provide an algorithm that matches this bound asymptotically.

Additionally, stability has been considered for online matching to some extent.
\citet{Dova22a} extends stability according to \citet{GaSh62a} to an online setting and shows that her extension can always be satisfied.
Still, her model has several conceptual differences from the online models discussed thus far.
Most notably, agents do not have to be matched immediately (but suffer from a discount in utility if matched later) and may arrive in batches.
Moreover, \citet{GLMV19a} study a two-stage process for school choice with the goal of preserving stability.
While this is generally impossible, they present efficient algorithms that maximize the number of additionally matched agents or minimize the number of reallocations compared to the matching of the first stage.
More loosely related, \citet{BeSa23a} touch upon an online model for recommender systems, where they---mostly experimentally---investigate stability.

\section{Preliminaries}\label{sec:prelims}

In this section, we present preliminaries.
For an integer $i\in \mathbb N$, we define $[i] := \{1,\dots,i\}$.

\subsection{Additively Separable Hedonic Games}
Let $N$ be a finite set of $n$ \emph{agents}. 
Any subset of $N$ is called a \emph{coalition}. 
We denote the set of all possible coalitions containing agent $i\in N$ by $\mathcal N_i := \{C \subseteq N \colon i \in C\}$.
A \emph{coalition structure} (or \emph{partition}) is a partition of the agents.
Given an agent $i \in N$ and a partition~$\pi$, let $\pi(i)$ denote the coalition of $i$, i.e., the unique coalition $C \in \pi$ with $i \in C$.

A \emph{hedonic game} is a pair $(N,\succsim)$ consisting of a set $N$ of agents 
and a preference profile ${\succsim} = (\succsim_i)_{i\in N}$ where $\succsim_i$ 
is a weak order over $\mathcal N_i$ which represents the preferences of agent $i$.
The semantics is that agent~$i$ strictly prefers coalition $C$ to coalition $C'$ if
$C\succ_i C'$ and is indifferent between coalitions $C$ and $C'$ if
$C\sim_i C'$.

An \emph{additively separable hedonic game} (ASHG) consists of a set $N$ of agents and a tuple $u = (u_i)_{i\in N}$ of \emph{utility functions} $u_i\colon N\to \mathbb Q$, such that, for every pair of coalitions $C,C'\in \mathcal N_i$, it holds that $C\succsim_i C'$ if and only if $\sum_{j\in C}u_i(j) \ge \sum_{j\in C'}u_i(j)$ \citep{BoJa02a}.
We usually represent an ASHG by the pair $(N,u)$. 
Clearly, an ASHG is a hedonic game.
We abuse notation and extend the definition of $u$ to coalitions $C\in \mathcal N_i$ and partitions~$\pi$ by $u_i(C) := \sum_{j\in C}u_i(j)$ and $u_i(\pi) := u_i(\pi(i))$, respectively.
Also, an ASHG can be represented equivalently by a complete directed graph $G = (N,E)$ with weight $u_i(j)$ on arc $(i,j)$.
An ASHG is said to be \emph{symmetric} if, for every pair of agents $i,j\in N$, it holds that $u_i(j) = u_j(i)$.
In this case, we also write $u(i,j) := u_i(j)$.
A complete undirected graph can naturally represent a symmetric ASHG.
Following \citet{ABS11c}, an ASHG is \emph{strict} if, for every pair of agents $i,j\in N$, it holds that $u_i(j)\neq 0$.

There are various important subclasses of ASHGs with restricted utility values.
Given a subset $U\subseteq \mathbb Q$, an ASHG is called a \emph{$U$-ASHG} if, for every pair of agents $i,j\in N$, it holds that $u_i(j) \in U$.
In particular, some ASHGs allow a natural interpretation in terms of friends and enemies.
A $U$-ASHG is called an \emph{appreciation-of-friends game} (AFG), \emph{aversion-to-enemies game} (AEG), \emph{friends-and-enemies game} (FEG), or \emph{friends-enemies-and-neutrals game} (FENG) if $U = \{n,-1\}$, $U = \{1,-n\}$, $U = \{1,-1\}$, or $U = \{1,0,-1\}$, respectively \citep{DBHS06a,BBT22a}.
In all of these games, the utility for a coalition depends on the distinction of friends and enemies, i.e., players that yield positive and negative utility, respectively.
In FEGs and FENGs, friends and enemies have equal importance, whereas in AFGs, a single friend outweighs an arbitrary number of enemies, and in AEGs, a single enemy annihilates any number of friends.

\subsection{Solution Concepts}\label{sec:solcon}

In this section, we define the solution concepts considered in this paper.
\Cref{fig:solution-concepts} gives an overview of their logical relationships.
We assume that we are given a fixed ASHG $(N,u)$.

\begin{figure}
	\centering
	\begin{tikzpicture}
		\usetikzlibrary{shapes.misc}
		\tikzstyle{roundedbox}=[draw, rounded rectangle, rounded rectangle arc length=180, minimum height=1.8em]
	\node (NS) [roundedbox] at (0, 0) {Nash Stability};
	\node (SCS) [roundedbox] at (5, 0) {Strict Core};
	\node[TUMGray] (WO) [roundedbox] at (10, 0) {Welfare Optimality};
	
	\node (CNS) [roundedbox] at (0, -2) {Contractual Nash Stability};
	\node (IS) [roundedbox] at (4.5, -2) {Individual Stability};
	\node (CS) [roundedbox] at (7.3, -2) {Core};
	\node (PO) [roundedbox] at (10, -2) {Pareto Optimality};
	
	\node (CIS) [roundedbox] at (5, -4) {Contractual Individual Stability};
	
	\draw[->] (NS) edge (CNS);
	\draw[->] (NS) .. controls +(down:1cm) and +(up:1cm) .. (IS);
	\draw[->] (SCS) .. controls +(down:1cm) and +(up:1cm) ..  (IS);
	\draw[->] (SCS) .. controls +(down:1cm) and +(up:1cm) .. (CS);
	\draw[->] (SCS) .. controls +(down:1cm) and +(up:1cm) .. (PO);
	
	\draw[->] (CNS) .. controls +(down:1cm) and +(up:1cm) .. (CIS);
	\draw[->] (IS) .. controls +(down:1cm) and +(up:1cm) .. (CIS);
	\draw[->] (PO) .. controls +(down:1cm) and +(up:1cm) .. (CIS);
	\draw[->] (WO) .. controls +(down:1cm) and +(up:1cm) .. (PO);

	\end{tikzpicture}
	
	\caption{Logical relationships between our solution concepts \citep[see, e.g.,]{AzSa15a}. 
	An arrow from concept $\alpha$ to concept $\beta$ indicates that if a partition satisfies~$\alpha$, then it also satisfies~$\beta$. 
	For reference, we also depict welfare optimality.
	}
	\label{fig:solution-concepts}
\end{figure}

Notions of stability capture agents' incentives to perform deviations \citep{BoJa02a,DiSu07a}.
A \emph{single-agent deviation} performed by agent~$i$ 
transforms a partition $\pi$ into a partition $\pi'$ 
where $\pi(i)\neq\pi'(i)$ and, for all agents $j\neq i$, it holds that $\pi(j)\setminus\{i\} = \pi'(j)\setminus\{i\}$.
The basic idea of deviations is that the deviating agent should immediately benefit from a deviation. 
A \emph{Nash deviation} is a single-agent deviation performed by agent~$i$ such that $u_i(\pi') > u_i(\pi)$.
Any partition in which no Nash deviation is possible is said to be \emph{Nash-stable} (NS).

The drawback of Nash stability is that only the preferences of the deviating agent are considered, which might seem too demanding in the context of cooperation.
Therefore, various refinements have been proposed, which additionally require the consent of the abandoned or the welcoming coalition.
An \emph{individual deviation} (or \emph{contractual deviation}) is a Nash deviation by agent~$i$ transforming $\pi$ into $\pi'$ 
such that, for all agents $j\in \pi'(i)\setminus\{i\}$ (or $j\in \pi(i)\setminus\{i\}$), it holds that $u_j(\pi')\ge u_j(\pi)$. 
Then, a partition is said to be \emph{individually stable} (IS) or \emph{contractually Nash-stable} (CNS) if it allows for no individual or contractual deviation, respectively. 
A single-agent deviation is called a \emph{contractual individual deviation} if it is both a contractual deviation and an individual deviation.
A partition is said to be \emph{contractually individually stable} (CIS) if it allows no contractual individual deviation.

Next, we introduce stability based on group deviations.
Consider a partition $\pi$ and a coalition $B\subseteq N$.
Then, $B$ is called a \emph{blocking coalition} for $\pi$ if, for all agents $i\in B$, it holds that $u_i(B) > u_i(\pi)$.
Moreover, $B$ is called a \emph{weakly blocking coalition} for $\pi$ if, for all agents $i\in B$, it holds that $u_i(B) \ge u_i(\pi)$, and there exists an agent $j\in B$ with $u_j(B) > u_j(\pi)$.
A partition is said to be in the \emph{core} (CR) (or \emph{strict core} (SCR)) if it admits no blocking coalition (or weakly blocking coalition).
Note that every blocking coalition is also weakly blocking.
Hence, the strict core prevents a larger set of possible group deviations and, therefore, is the stronger solution concept.
Some authors refer to the strict core as strong core \citep{BoJa02a}.
When we speak of performing a group deviation, we mean that agents form a (weakly) blocking coalition.

For a more concise notation, we refer to deviations with respect to stability concept $\alpha\in \{$NS, IS, CNS, CIS, CR, SCR$\}$ as \emph{$\alpha$ deviations}, e.g., IS deviations for $\alpha =$ IS.
Similarly, we refer to a partition satisfying stability concept~$\alpha$ as an $\alpha$ partition.

Finally, we also consider Pareto optimality, which can be seen as a stability guarantee, where the whole set of agents cannot perform a group deviation.
A partition $\pi'$ is said to \emph{Pareto-dominate} a partition $\pi$ if, for every agent $i\in N$, it holds that $u_i(\pi')\ge u_i(\pi)$, and there exists an agent $j\in N$ with $u_j(\pi') > u_j(\pi)$.
A partition $\pi$ is said to be \emph{Pareto-optimal} (PO) if it is not Pareto-dominated by another partition.
Pareto optimality is a classical concept of economic efficiency, and already was a primal objective during the birth of hedonic games \citep{DrGr80a}.
Note that Pareto optimality is a weakening of welfare optimality, which was the objective in the literature on online ASGHs thus far \citep{FMM+21a,BuRo23a}.
Moreover, at first glance, Pareto dominance feels like a global variant of weakly blocking coalitions, and therefore seemingly leads to a more demanding solution concept.
However, the exact opposite is true.
While group deviations based on (weakly) blocking coalitions lead to partitions that can be inferior for other agents, every Pareto dominance gives rise to a weakly blocking coalition (the ones containing agents that are strictly better off).
Hence, partitions in the strict core are Pareto-optimal.

\subsection{Online Coalition Formation}

In this section, we introduce our model of online coalition formation, following the notation of \citet{BuRo23a}.
The online setting is not restricted to ASHGs, so we define it for a general hedonic game $G = (N,\succsim)$.
Let $\Sigma(G) := \{\sigma\colon [|N|] \to N \textnormal{ bijective}\}$ be the set of all \emph{orders} of the agent set $N$.
Given a subset of agents $M\subseteq N$, let $G[M]$ be the hedonic game restricted to agent set $M$, i.e., the hedonic game $(M,\succsim^M)$ where 
 $C\succsim_i^M D$ if and only if $C\succsim_i D$.
Moreover given a partition $\pi$ of a set $N$ and a subset of agents $M\subseteq N$, we define $\pi[M]$ as the \emph{partition restricted to $M$} as $\pi[M] := \{C\cap M\colon C\in \pi, C\cap M\neq \emptyset\}$.
Specifically, if $M$ consists of all agents except a single agent $i\in N$, then we write $\pi - i := \pi[N\setminus\{i\}]$.

An instance of an \emph{online coalition formation} problem is a pair $(G,\sigma)$, where $G = (N,\succsim)$ is a hedonic game and $\sigma\in \Sigma(G)$.
An \emph{online coalition formation algorithm} for instance $(G,\sigma)$ gets as input the sequence $G_1,\dots,G_n$, where, for every $i\in [n]$, $G_i := G[\{\sigma(j)\colon 1\le j \le i\}]$.
Then, for every $i\in [n]$, the algorithm has to produce a partition $\pi_i$ of $\{\sigma(j)\colon 1\le j \le i\}$ such that
\begin{itemize}
	\item the algorithm has only access to $G_i$ and
	\item $\pi_i - \sigma(i) = \pi_{i-1}$.
\end{itemize}
The output of the algorithm is the partition $\pi_n$.
Given an online coalition formation algorithm $\alg$, let $\alg(G,\sigma)$ be its output for instance $(G,\sigma)$.
If $\sigma$ is clear from the context, we omit it from this notation and simply write $\alg(G)$.

More informally, the algorithm iteratively creates a partition such that it only has access to the utilities of the currently present agents when irrevocably adding a new agent to an existing or new coalition.
In addition to deterministic algorithms, we also consider randomized algorithms. 
This means that the decisions as to which coalition an agent is added to can be random. 

Unlike welfare optimality, stability concepts do not naturally yield a quantitative maximization objective, and we cannot directly perform the usual competitive analysis.
Instead, we have qualitative objectives that are either satisfied or not by an output.
Therefore, we desire algorithms that output stable partitions with a high probability if agents arrive online, which once again is a quantitative objective.

Consider a solution concept $\alpha\in \{$NS, IS, CNS, CIS, PO, CR, SCR$\}$ and an algorithm $\alg$.
We define the \emph{$\alpha$ guarantee} of $\alg$ as
\[
W_\alpha(\alg) :=
\inf_G\min_{\sigma\in\Sigma(G)}\pr(\alg(G,\sigma) \textnormal{ is an }\alpha\textnormal{ partition})\text.
\]
Here, the probability is taken according to the randomized decisions of $\alg$.
Hence, $W_\alpha(\alg)$ is the worst-case probability that $\alg$ outputs an $\alpha$ partition.

Note that the $\alpha$ guarantee of deterministic algorithms is either one---if the algorithm always outputs an $\alpha$ partition---or zero---if the algorithm does not output an $\alpha$ partition for some input instance.

\section{Results}\label{sec:stab}

In this section, we present our results.
Only instances that allow for a stable partition are relevant for the consideration of stability in online coalition formation.
Otherwise, no algorithm, and therefore especially no online algorithm, can make any guarantee.\footnote{\label{fn:onlinevsoffline}We can also take the viewpoint of comparing the capabilities of online algorithms with offline possibilities: if no stable partition exists, then any online algorithm is as good as an optimal offline algorithm.}

In the literature on stability in ASHGs, two restrictions have turned out to be vital for the existence of stable partitions, namely symmetry and severe utility restrictions \citep{BoJa02a,DBHS06a,BBT22a}.
First, in symmetric ASHGs, Nash-stable partitions (and therefore partitions satisfying all weaker single-deviation stability notions) are guaranteed to exist \citep{BoJa02a}.
Moreover, in (possibly asymmetric) FEGs, AEGs, and AFGs, it is guaranteed that individually stable and contractually Nash-stable partitions exist \citep{BBT22a}.
Finally, utility restrictions may also lead to group stability \citep{DBHS06a}:
In particular, AFGs contain partitions in the strict core. 
While this is not the case for AEGs, these at least contain partitions in the core.
In this section, we will see (the conjunction of) which of these assumptions are sufficient to allow for the computation of stable outcomes online, and which of the results in the offline setting cause problems online.

\subsection{Contractual Nash Stability}

We start with the consideration of contractual Nash stability, where every agent in the abandoned coalition can veto a single-agent deviation.
As a warm-up, we begin with a simple proposition that gives a first hint as to why computing stable partitions in an online manner is a nontrivial task. 
Even the conjunction of symmetry and utility restrictions is not sufficient for computing CNS partitions.

\begin{proposition}\label{prop:CNSnotAFG}
	There exists no deterministic online algorithm, which always outputs a CNS partition for symmetric AFGs.
\end{proposition}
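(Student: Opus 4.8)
The plan is to argue via an adaptive adversary against an arbitrary deterministic online algorithm $\alg$. Since $\alg$ is deterministic and known, the adversary can simulate its decisions and reveal agents so as to steer $\alg$ into a configuration that no later placement can repair. The crucial leverage is the online constraint $\pi_i - \sigma(i) = \pi_{i-1}$, which forbids merging two already-formed coalitions: once two agents have been separated, they can never again share a coalition.

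First I would record a convenient characterization of CNS in symmetric AFGs, which makes the consent condition of the abandoned coalition transparent. If agent~$i$ leaves its coalition $\pi(i)$, every $j\in\pi(i)\setminus\{i\}$ changes its utility by exactly $-u(i,j)$, so the abandoned coalition consents if and only if $i$ is an enemy of all its current coalition-mates. Combined with the fact that in an AFG a single friend (value~$n$) outweighs any number of enemies, this yields: a partition is CNS if and only if (i)~every agent in a coalition of size at least two has at least one friend there, and (ii)~every singleton agent has no friend anywhere. I would verify both directions quickly; the only content is that a friendless singleton can always profitably join a coalition containing one of its friends (an empty abandoned coalition consents vacuously), while an agent that is an enemy to all of its coalition-mates can profitably secede to a singleton.

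With this characterization the construction is short. Agent~$1$ arrives and is forced into $\{1\}$. Agent~$2$ arrives as an \emph{enemy} of~$1$, i.e. $u(1,2)=-1$, and I split on the forced choice of $\alg$. If $\alg$ forms $\{1,2\}$, the adversary stops: in the resulting two-agent AFG, agent~$1$ is an enemy of its only coalition-mate and strictly prefers to secede, so by~(i) the output is not CNS. Otherwise $\alg$ forms $\{1\},\{2\}$, and agent~$3$ arrives as a \emph{friend} of both, so $u(1,3)=u(2,3)=n$ while $u(1,2)=-1$. The only admissible placements of~$3$ are to join $\{1\}$, to join $\{2\}$, or to form $\{3\}$; merging $\{1\}$ and $\{2\}$ is forbidden. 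In each case some agent remains a singleton while possessing a friend in another coalition (agent~$2$, agent~$1$, or agent~$3$, respectively), violating~(ii). Hence both branches produce a non-CNS output, contradicting the assumed guarantee. I would also note that the three-agent instance does admit the CNS partition $\{\{1,2,3\}\}$ offline, so the obstruction is genuinely caused by irrevocability rather than infeasibility.

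The step I expect to require the most care is the legitimacy of the adversary under the AFG convention, where the friend value equals the final number of agents~$n$, which $\alg$ cannot know when it commits agent~$2$. I would resolve this by observing that the restricted instance $G_2$ seen at step~$2$ contains the single edge $u(1,2)=-1$, whose value is the enemy value \emph{independently of}~$n$; hence $G_2$ is literally identical whether the adversary ultimately reveals two or three agents, so the placement of agent~$2$ is genuinely determined by a single view, and the adversary is free to choose the defeating continuation (stopping, or adding the common friend). Finally I would confirm that the constructed instance is symmetric and that all utilities lie in $\{n,-1\}$, so that it is indeed a symmetric AFG.
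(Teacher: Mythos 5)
Your proposal is correct and takes essentially the same route as the paper: the same adversarial instance (an enemy pair $u(a,b)=-1$, then---if the algorithm separates them---a common friend $c$ with utility $3=n$ to both), followed by the same exhaustive case analysis of the algorithm's irrevocable choices. Your extra scaffolding (the CNS characterization for symmetric AFGs and the remark that the two-agent view is independent of the eventual value of $n$) is sound but only makes explicit what the paper's shorter argument uses implicitly.
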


\begin{figure}
	\centering
	\begin{tikzpicture}
		
		\node[draw, circle](a1) at (150:1.2){\color{white}{$v3$}};
		\node at (a1){$a$};
		\node[draw, circle](b1) at (30:1.2){\color{white}{$v3$}};
		\node at (b1) {$b$};
		\node[draw, circle](c1) at (270:1.2){\color{white}{$v3$}};
		\node at (c1) {$c$};
		
		\draw (a1) edge node[pos = 0.5, fill = white,yshift=1pt]{$-1$} (b1);
		\draw (a1) edge node[pos = 0.5, fill = white]{$n$}  (c1);
		\draw (b1) edge node[pos = 0.5, fill = white]{$n$}  (c1);

		\node[draw, circle](a2) at ($(5,0)+(150:1.2) $){\color{white}{$v3$}};
		\node at (a2){$a$};        
		\node[draw, circle](b2) at ($(5,0)+(30:1.2)$){\color{white}{$v3$}};
		\node at (b2) {$b$};       
		\node[draw, circle](c2) at ($(5,0)+(270:1.2)$){\color{white}{$v3$}};
		\node at (c2) {$c$};
		
		\draw (a2) edge node[pos = 0.5, fill = white,yshift=1pt]{$-1$} (b2);
		\draw (a2) edge node[pos = 0.5, fill = white]{$-1$} (c2);
		\draw (b2) edge node[pos = 0.5, fill = white]{$-1$} (c2);

		\node[draw, circle](a) at ($(0,2) + (150:1.2) + (barycentric cs:a1=1,b2=1)$){\color{white}{$v3$}};
		\node at (a){$a$};
		\node[draw, circle](b) at ($(0,2) + (30:1.2) + (barycentric cs:a1=1,b2=1)$){\color{white}{$v3$}};
		\node at (b){$b$};
		
		\draw (a) edge node[pos = 0.5, fill = white,yshift=1pt]{$-1$} (b);

		\draw[bend right] ($(a)+(-.5,-.3)$) edge[->] ($(0,.5) + (barycentric cs:a1=1,b1=1)$);
		\draw[bend left] ($(b)+(.5,-.3)$) edge[->] ($(0,.5) + (barycentric cs:a2=1,b2=1)$);
		
	\end{tikzpicture}
	\caption{Adversarial AFGs for computing CNS partitions.
	Every deterministic algorithm fails for one of the two possible instances.}
	\label{fig:noCNSalg}
\end{figure}

\begin{proof}
	Assume for contradiction that $\alg$ always outputs a CNS partition for symmetric AFGs.
	Consider the following two AFGs $(N,u_1)$ and $(N,u_2)$ with identical agent set $N = \{a,b,c\}$ and symmetric utilities $u_1(a,b) = u_2(a,b) = -1$, $u_1(a,c) = u_1(b,c) = n$ (in this case, $n = 3$), and $u_2(a,c) = u_2(b,c) = -1$.
	Consider the arrival order $a$, then $b$, then~$c$.
	Before the arrival of $c$, $\alg$ cannot distinguish, whether the game will be $(N,u_1)$ or $(N,u_2)$.
	The situation is depicted in \Cref{fig:noCNSalg}.
	
	If $\alg$ creates $\{a,b\}$ at the arrival of $b$, then it fails for $(N,u_2)$. 
	Hence, $\alg$ has to create a new coalition $\{b\}$. 
	When $c$ arrives, $\alg$ cannot form a new singleton coalition as otherwise $a$ (or $b$) has a CNS deviation to join $c$.
	Assume without loss of generality that $\alg$ forms $\{a,c\}$.
	Then, $b$ has a CNS deviation to join them.
	Hence, $\alg$ fails for $(N,u_1)$.
\end{proof}

However, the previous result seems to rely on small negative utilities.
In fact, we can compute CNS partitions with an online algorithm for other restricted classes.
The basic idea of our algorithm is to establish that agents in a coalition of size at least~$2$ are not allowed to leave their coalition because some other agent would veto this.
Moreover, we use our assumption on the utility values to show that agents in singleton coalitions can never gain positive utility by joining a constructed coalition.

\begin{theorem}\label{thm:AlgCNS}
	Let $y \ge x > 0$. Then, there exists a deterministic online algorithm, which always outputs a CNS partition for symmetric $\{-y,x\}$-ASHGs.
\end{theorem}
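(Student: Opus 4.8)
The plan is to exhibit a deterministic algorithm that maintains, after the insertion of each agent, two invariants that together imply contractual Nash stability, and to argue its correctness by induction on the arrival order. Write the $\{-y,x\}$-ASHG utilities so that $x$ is the \emph{friend} value and $-y$ the \emph{enemy} value, with $y \ge x > 0$, and call a coalition of size one a \emph{singleton}. The invariants I would maintain on the current partition are: (I1) every agent in a coalition of size at least two has at least one friend inside it; and (I2) for every singleton $i$ and every other coalition $C$, it holds that $u_i(C) \le 0$. First I would check that (I1) and (I2) imply CNS: by (I1), an agent in a non-singleton coalition has an internal friend who strictly loses utility upon its departure, so this friend vetoes every contractual deviation and the agent is locked in; by (I2), a singleton can only ever join some coalition $C$, gaining $u_i(C)\le 0$, so it has no Nash, hence no contractual, deviation. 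Observe that (I2) applied to two singletons $i,j$ forces $u(i,j) \le 0$, i.e.\ \emph{singletons are pairwise enemies}, which is the workhorse of the argument.

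The algorithm processes the arriving agent $a$ as follows, respecting the online constraint that $a$ may be added to only one existing coalition (or start a new one) while all previously formed coalitions stay fixed: (1)~if some singleton $\{b\}$ satisfies $u(a,b)=x$, add $a$ to it, forming $\{a,b\}$; (2)~otherwise, if some non-singleton coalition $C$ satisfies $u_a(C) > 0$, add $a$ to such a $C$; (3)~otherwise, open the new singleton $\{a\}$. These cases are exhaustive. Since only the coalition that $a$ enters changes, I would only need to reverify (I1) and (I2) for that single modified coalition, the base case of one agent being immediate.

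The main obstacle is the inductive step for (I2), namely controlling singletons that are friends of the freshly inserted $a$, because adding a friend to a coalition raises their utility for it; this is precisely where the hypothesis $y \ge x$ enters. In case~(1), for any remaining singleton $c$ we have $u_c(\{a,b\}) = u(c,a) + u(c,b)$, and since $c$ and $b$ were both singletons before $a$ arrived, the induction hypothesis gives $u(c,b) = -y$, so $u_c(\{a,b\}) \le x - y \le 0$: the enemy $b$ cancels the friend $a$ exactly because enemies weigh at least as much as friends. In case~(2), the branch is reached only when $a$ has no singleton friend, so every singleton $c$ is an enemy of $a$ and $u_c(C \cup \{a\}) = u_c(C) - y < 0$ follows at once. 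Invariant (I1) is easy throughout: in cases~(1) and~(2) the new agent has a friend in its coalition (in case~(2) because $u_a(C)>0$ forces at least one friend), and the existing members retain the friends they already had.

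It remains to confirm that the move chosen in each case is valid, in particular that the agent placed as a singleton in case~(3) does not itself want to deviate. When case~(3) is reached, $a$ has no singleton friend and $u_a(C) \le 0$ for every non-singleton $C$, so $a$ has no profitable coalition to join and, being an enemy of every existing singleton, attracts no one; thus (I2) holds for $a$ as well. As (I1) and (I2) are preserved after every arrival, the final partition---and indeed every intermediate one---is CNS, which is the desired guarantee.
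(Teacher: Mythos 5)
Your proof is correct and takes essentially the same approach as the paper: the paper's Algorithm~1 likewise pairs the newcomer with a singleton friend if possible, otherwise places it in a non-singleton coalition of a friend, otherwise opens a singleton, and its correctness proof is an induction on the same two invariants (every member of a non-singleton coalition has an internal friend who vetoes departures; singletons cannot strictly gain by joining any coalition), with $y \ge x$ used exactly where you use it. The only cosmetic differences are that the paper's second case joins the coalition of \emph{any} friend even when the newcomer's total utility for it is negative (which its invariants tolerate, since that agent is then locked in by the veto), and its singleton invariant is stated more structurally, whereas your weaker invariant (I2) is, as you verify, still self-sustaining.
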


\begin{proof}
	Let $y \ge x > 0$ and consider a symmetric $\{-y,x\}$-ASHG with agent set $N = \{a_i\colon 1\le i \le n\}$ and arrival order $a_1,\dots, a_n$.
	We apply \Cref{alg:CNSsymFEG} to compute a partition $\pi$.
	This algorithm proceeds as follows.
	Whenever a new agent arrives, we compute the set of present agents with positive utility for the new arrival.
	Note that, by symmetry, this implies a mutual positive utility.
	Assume there is at least one.
	Then, we check if at least one of them is in a singleton coalition.
	If such an agent exists, we let the new agent join this singleton coalition.
	Otherwise, we add $a_i$ to any coalition of an agent with positive utility.
	If no such agent exists, we form a new singleton coalition.
	
\begin{algorithm}
  \caption{Contractually Nash-stable partition of online symmetric $\{-y,x\}$-ASHGs for $y\ge x > 0$.}
  \label{alg:CNSsymFEG}
  \begin{flushleft}
    \textbf{Input:} Symmetric $\{-y,x\}$-ASHG\\
    \textbf{Output:} Contractually Nash-stable partition $\pi$
  \end{flushleft}

  \begin{algorithmic}[]
\STATE $\pi\leftarrow \emptyset$
  \FOR{$i = 1,\dots, n$}
  \STATE $N_i \leftarrow \{j\in [i-1]\colon u(a_i,a_j)>0\}$
  \IF {$\exists j\in N_i$ with $|\pi(a_j)| = 1$}
	\STATE $\pi \leftarrow \pi\setminus \{\{a_j\}\}\cup \{\{a_i,a_j\}\}$
	\ELSIF {$\exists j\in N_i$ with $|\pi(a_j)| > 1$}
	\STATE $\pi \leftarrow \pi\setminus \{\pi(a_j)\}\cup \{\pi(a_j)\cup\{a_i\}\}$
	\ELSE
	\STATE $\pi \leftarrow \pi\cup \{\{a_i\}\}$
  \ENDIF
	\ENDFOR
  \RETURN $\pi$
 \end{algorithmic}
\end{algorithm}

	We claim that the partition~$\pi$ is contractually Nash-stable.
	We show the claim by induction.
	Recall that, for every $i\in [n]$, $\pi_i$ is the partition created by the algorithm after agent $a_i$ has been assigned to a coalition.
	\begin{claim}\label{cl:CNS}
		For every $i\in [n]$, the following statements are true:
		\begin{enumerate}
			\item The partition $\pi_i$ is CNS.
			\item For every coalition $C\in \pi_i$ with $|C|\ge 2$ and $a_k\in C$, there exists an agent $a_\ell\in C$ with $u(a_k,a_\ell) = x$.
			\item For every $k\in [i]$ with $|\pi_i(a_k)| = 1$,
			it holds that, if $\ell\in [i]$ with $u(a_k, a_\ell) = x$, then $|\pi_i(a_\ell)|\ge 2$ and $u(a_k, b) = - y$ for all $b\in \pi_i(a_\ell)\setminus\{a_\ell\}$
		\end{enumerate}
	\end{claim}
	\renewcommand\qedsymbol{$\vartriangleleft$}
	\begin{proof}
		All three statements are true for $i = 1$. 
		Assume now that the statements are true for some $1\le i < n$.
		Let $N_{i+1} := \{j\in [i]\colon u(a_{i+1},a_j)>0\}$.
		We start by proving the second and third assertions by a case distinction according to the different cases in the algorithm.
		
		Assume first that there exists $j\in N_{i+1}$ with $|\pi_i(a_j)| = 1$ and that we have $\pi_{i+1} = \pi_i \setminus \{\{a_j\}\} \cup \{\{a_j,a_{i+1}\}\}$. 
		Clearly, the second assertion for $i+1$ follows by induction and $u(a_{i+1},a_j) = x$.
		For the third assertion, let $k\in [i+1]$ with $|\pi_{i+1}(a_k)| = 1$ and consider $\ell\in [i+1]$ with $u(a_k, a_\ell) = x$.
		Since $|\pi_{i+1}(a_{i+1})| = 2$, it holds that $k\neq i+1$.
		By the induction hypothesis for the third assertion, the third assertion is true unless $\ell \in \{i+1,j\}$.
		Moreover, again by the induction hypothesis for the third assertion, $u(a_k,a_j) = -y$, and therefore the assertion is true if $\ell = i+1$.
		
		Next, assume that there exists no $j\in N_{i+1}$ with $|\pi_i(a_j)| = 1$, but $N_{i+1} \neq \emptyset$ and that
		$\pi_{i+1} = \pi_i\setminus \{\pi_i(a_j)\}\cup \{\pi_i(a_j)\cup\{a_{i+1}\}\}$. 
		Note that the second assertion is true for agent $a_{i+1}$ because $a_j\in \pi_{i+1}(a_{i+1})$. 
		For all other agents, the second assertion follows by induction.
		As there exists no $j\in N_{i+1}$ with $|\pi(a_j)| = 1$, the third assertion follows by induction.
		
		Finally, if $N_{i+1} = \emptyset$, then  $\pi_{i+1} = \pi_i \cup \{\{a_{i+1}\}\}$.
		Hence, the second assertion follows by induction, and the third assertion follows by induction for all agents except for $a_{i+1}$.
		For $a_{i+1}$, it is true because $N_{i+1} = \emptyset$.
		
		It remains to prove the first assertion.
		We show how it follows from the second and third assertions.
		By the second assertion, no agent in a coalition of size at least $2$ is allowed to leave their coalition and can, therefore, not perform a CNS deviation.
		On the other hand, by the third assertion, no agent in a singleton coalition can improve their utility by joining any other coalition.
		Hence, $\pi_{i+1}$ is a CNS partition.
		This completes the proof of the claim.
	\end{proof}
	
	The assertion of \Cref{thm:AlgCNS} follows from \Cref{cl:CNS} for the case $i = n$.
\renewcommand\qedsymbol{$\square$}\end{proof}

In particular, \Cref{thm:AlgCNS} applies to symmetric FEGs and AEGs.
For the latter, we must deal with variable utility values that depend on the number of players.
However, \Cref{thm:AlgCNS} applies to individual games, and each AEG satisfies the conditions of the theorem. 
By contrast, \Cref{thm:AlgCNS} breaks down if we additionally allow for the utility value of~$0$, even if the positive and negative utilities are restricted further.
We defer this result to \Cref{sec:CIS}, where we get it as a byproduct during the consideration of CIS.

To conclude this section, we show how to strengthen \Cref{prop:CNSnotAFG} for randomized algorithms.
The idea is to construct a random instance where every deterministic algorithm performs poorly.
We can then apply Yao's principle \citep{Yao77a} to bound the performance of any randomized algorithm.
For this, we create a random version of the game in \Cref{prop:CNSnotAFG}, where every deterministic algorithm succeeds in computing a CNS partition with probability at most $\nicefrac{1}{2}$.
Modifying this instance by concatenating $k$ copies of this instance implies that every deterministic algorithm succeeds with probability at most $2^{-k}$.

\begin{theorem}\label{thm:CNSImpossible}
	Let $\alg$ be any randomized online algorithm for symmetric AFGs.
	Then, it holds that $W_{\textnormal{CNS}}(\alg) = 0$.
\end{theorem}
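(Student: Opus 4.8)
The goal is to show that for every randomized online algorithm $\alg$ on symmetric AFGs, the worst-case probability of outputting a CNS partition is exactly zero. The natural tool, as the authors hint, is Yao's principle: instead of reasoning directly about $\alg$'s internal coin flips, I would exhibit a distribution over input instances (games together with arrival orders) on which \emph{every} deterministic online algorithm succeeds with vanishingly small probability. Since the expected success probability of the best deterministic algorithm against this input distribution upper-bounds $W_{\textnormal{CNS}}(\alg)$ for any randomized $\alg$, driving the former to $0$ suffices.

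The building block is the three-agent gadget from \Cref{prop:CNSnotAFG}, which I would randomize so that a deterministic algorithm cannot ``see'' the decisive choice coming. Concretely, I would take the agent set $\{a,b,c\}$ and keep $u(a,c) = u(b,c) = n$ (the AFG friend value) with $u(a,b) = -1$, but randomize the roles of $a$ and $b$ in the arrival order, so that when $c$ arrives the algorithm has already committed $c$'s future friend into some coalition without knowing which of the two first-arriving agents will end up being $c$'s coalition-mate. The analysis of \Cref{prop:CNSnotAFG} shows that \emph{some} placement of $c$ forces a CNS violation; by symmetrizing which of the first two agents is the ``wrong'' one, I would argue that any deterministic algorithm guesses correctly with probability at most $\nicefrac{1}{2}$ on a single gadget. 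The one subtlety here is to make sure the randomization genuinely hides information: the two first-arriving agents must be indistinguishable to the algorithm at the moment it must act, so the early prefixes of the two random realizations should be identical as games, with the distinguishing utility only revealed through $c$.

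To push the success probability all the way to $0$ rather than merely bounding it by $\nicefrac{1}{2}$, I would concatenate $k$ independent copies of the randomized gadget into a single instance on $3k$ agents, using disjoint agent sets so that the copies do not interfere (in an AFG, agents in different gadgets are mutual enemies with utility $-1$, which only discourages cross-gadget merging and does not create new profitable deviations). Because CNS is a global property of the final partition, the algorithm must get \emph{every} gadget right; since the $k$ random choices are independent and each is correctly resolved with probability at most $\nicefrac{1}{2}$, the overall success probability is at most $2^{-k}$. Taking $k \to \infty$ shows the infimum over games in the definition of $W_{\textnormal{CNS}}$ is $0$.

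The main obstacle I anticipate is the independence/non-interference claim across gadgets: I must verify that a CNS deviation cannot be ``cured'' in one gadget by exploiting agents of another, and conversely that no new cross-gadget deviation is introduced. Because the friend value in an AFG scales with the total number of agents $n = 3k$, I would need to check that setting cross-gadget utilities to $-1$ (the enemy value) is consistent with the AFG definition $U = \{n, -1\}$ for the combined instance, and that within each gadget the friend value $n$ still dominates as required so that the $c$-agent strictly prefers joining its friend's coalition. Handling this scaling carefully, and confirming that the per-gadget obstruction argument of \Cref{prop:CNSnotAFG} survives verbatim once the utility magnitude is the global $n$ rather than the local $3$, is the step that needs the most care; the probabilistic combination via Yao's principle is then routine.
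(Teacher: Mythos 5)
Your high-level architecture (per-gadget failure probability $\nicefrac{1}{2}$, concatenation of $k$ independent gadgets with cross-gadget utility $-1$, success probability at most $2^{-k}$, Yao's principle, $k\to\infty$) is exactly the paper's, and your worry about the AFG friend value scaling with the total number of agents is legitimate (the paper sets it to $3k$). However, your per-gadget randomization does not work, and this is a genuine gap. You keep both $u(a,c)=u(b,c)=n$ positive and randomize only the roles of $a$ and $b$ in the arrival order. Since $a$ and $b$ are completely symmetric in that game, the two realizations are the same instance and nothing is hidden: the deterministic algorithm that pairs the first two agents and then adds $c$ outputs the grand coalition $\{a,b,c\}$, which \emph{is} CNS (each of $a,b$ gets utility $n-1>0$, so no agent wants to leave), and hence succeeds with probability $1$ against your distribution. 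Note that in \Cref{prop:CNSnotAFG} the algorithm is deterred from pairing $a$ and $b$ only by the adversary's threat to \emph{stop} the instance at two agents; in a fixed-size randomized instance that threat must be replaced by something, and your distribution does not replace it.

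Even under the more charitable reading where you randomize \emph{which} of $a,b$ is the unique friend of $c$ (say $u(a,c)=n,\,u(b,c)=-1$ or vice versa), the construction still fails: the algorithm that keeps the first two agents in singleton coalitions and, upon $c$'s arrival, places $c$ with its friend succeeds with probability $1$, because in the online model the algorithm sees all of $c$'s utilities to the present agents \emph{before} it has to place $c$, so the ``distinguishing utility revealed through $c$'' is revealed in time to act on it. The hidden information must instead affect a decision that is made \emph{before} the information becomes available; here that decision is where to place $b$ (pair it with $a$ or not), and what must be hidden at that moment is whether $c$ will be a friend at all. This is precisely the paper's randomization: with probability $\nicefrac{1}{2}$ set $u(a_i,c_i)=u(b_i,c_i)=3k$, and with probability $\nicefrac{1}{2}$ set both to $-1$. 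Then pairing $\{a_i,b_i\}$ fails when $c_i$ turns out to be an enemy (the pair with mutual utility $-1$ admits a contractual deviation that $c_i$ cannot cure), and separating fails when $c_i$ turns out to be a friend (by the argument of \Cref{prop:CNSnotAFG}); either way each deterministic algorithm fails on a gadget with probability $\nicefrac{1}{2}$, which is exactly what your concatenation-plus-Yao step needs as input.
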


\begin{proof}
	Let $k\in \mathbb N$ be a positive integer. 
	We define the random AFG $G = (N, u)$ where $N = \bigcup_{i\in [k]}N_i$ for $N_i = \{a_i,b_i,c_i\}$.
	The random utilities are given by
	$u(a_i,b_i) = -1$ and, with probability $\nicefrac 12$ each, it holds that $u(a_i,c_i) = u(b_i,c_i) = 3k$ or  $u(a_i,c_i) = u(b_i,c_i) = -1$. 
	Note that $3k$ is the number of agents.
	All other utilities are set to $-1$.
	The randomizations for the utilities within $N_i$ and $N_j$ for $1\le i < j\le k$ are performed independently.
	The agents arrive in the order $N_1$, \dots, $N_k$ and within a set $N_i$, first $a_i$, then $b_i$, then $c_i$.
	In other words, $G$ is a uniformly random choice from a set of $2^k$ instances, each of which is a composition of $k$ gadgets drawn independently from the same distribution.
	Each gadget is one of the two games considered in \Cref{prop:CNSnotAFG} with equal probability.
	
	Now, let $\alg$ be an arbitrary deterministic algorithm for AFGs and define $\pi := \alg(G)$.
	By the proof of \Cref{prop:CNSnotAFG}, for every $i\in [k]$, $\alg$ fails with probability at least $\nicefrac{1}{2}$ on $G[N_i]$.
	
	Moreover, by design of the random instance, if $\alg$ computes a CNS partition on $G$, then, for all $i\in [k]$, $\pi[N_i]$ is CNS for $G[N_i]$.
	Indeed, if $u(a_i,c_i) = u(b_i,c_i) = -1$, then $\pi$ is CNS only if all agents in $N_i$ are in singleton coalitions, and hence $\pi[N_i]$ is CNS for $G[N_i]$.
	If $u(a_i,c_i) = u(b_i,c_i) = 3k$, then $a_i\in \pi(c_i)$ and $b_i\in \pi(c_i)$ as these agents would perform a CNS deviation to join $c_i$, otherwise. 
	Hence, $\pi[N_i] = \{N_i\}$, which is CNS for $G[N_i]$.

	Now, observe that, by the arrival sequence of the agents, the performance of $\alg$ on $N_i$ is at most as good as the performance of the best algorithm for $G[N_i]$.
	Therefore, using the independence of the random selection of the utilities, the probability that $\alg$ computes a CNS partition on $G$ is bounded by the product of the probabilities that $\pi[N_i]$ is CNS for $G[N_i]$.
	Hence, $\pi$ is a CNS partition with probability at most $2^{-k}$.
	
	By Yao's principle, no randomized algorithm can compute a CNS partition with probability more than $2^{-k}$ for every (deterministic) symmetric AFG.
	Since $k$ is chosen arbitrarily, this proves the assertion.
\end{proof}

\subsection{Contractual Individual Stability and Pareto Optimality}\label{sec:CIS}

Next, we consider CIS, the weakening of CNS, where an agent in the welcoming coalition can also veto a single-agent deviation.
\Cref{alg:CNSsymFEG} can be used to compute CIS partitions, even if we allow for strict and symmetric ASHGs as input.
However, our following result achieves even more and shows the existence of an online algorithm for computing PO partitions in strict (and possibly nonsymmetric) ASHGs.
Recall that PO is a stronger notion than CIS.
The presented algorithm is an online adaptation of serial dictatorships.
This algorithmic approach is known to be successful in achieving Pareto optimality for offline ASHGs \citep{ABS11c,Bull19a} and online fair division \citep{AlWa19a}.\footnote{Similar to the online fair division literature, our online serial dictatorship algorithm can be shown to have the additional desirable property of strategyproofness.}
The idea is to assign a dictator to every created coalition, and these are asked in the order of their arrival whether they want newly arriving agents to be part of their coalition.

\begin{theorem}\label{thm:algPOstrict}
	There exists a deterministic online algorithm, which always outputs a PO partition for strict ASHGs.	
\end{theorem}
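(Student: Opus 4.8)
The plan is to realize the online serial dictatorship sketched before the statement and then verify Pareto optimality by a lexicographic argument over the dictators. Concretely, I would maintain for every coalition a \emph{dictator}, defined as its earliest-arriving member, and keep the list of dictators ordered by arrival time, say $d_1,d_2,\dots$ with $d_1=a_1$. When a new agent $a_i$ arrives, I would offer it to the current dictators in their arrival order: the first dictator $d$ with $u_d(a_i)>0$ takes $a_i$ into $\pi(d)$, and if no dictator has positive utility for $a_i$, then $a_i$ opens a fresh singleton coalition and becomes its own dictator. Since utilities are additively separable, adding $a_i$ changes $d$'s utility by exactly $u_d(a_i)$, so ``wanting'' $a_i$ is the same as $u_d(a_i)>0$; strictness guarantees $u_d(a_i)\neq 0$, so there is never an ambiguous case. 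This algorithm only ever appends $a_i$ to the current partition, so it respects the online constraint $\pi_i-a_i=\pi_{i-1}$, and each decision uses only the utilities of agents already present.

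The heart of the proof is a structural characterization of the coalitions in the final partition $\pi$. Writing $R_k:=N\setminus\bigcup_{j<k}\pi(d_j)$ for the agents not claimed by the dictators preceding $d_k$, I would show that
\[
\pi(d_k)=\{d_k\}\cup\{a\in R_k: u_{d_k}(a)>0\},
\]
which, by strictness (include every strictly beneficial agent, exclude every strictly harmful one, with no zero-utility agents to cause ties), is the \emph{unique} maximizer of $u_{d_k}$ among all coalitions $C$ with $d_k\in C\subseteq R_k$. Given this, Pareto optimality follows by induction on $k$: assuming some $\pi'$ Pareto-dominates $\pi$, I first note that $\pi'(d_1)$ can be no better than the globally optimal coalition $\pi(d_1)$, forcing $\pi'(d_1)=\pi(d_1)$ by uniqueness; and if $\pi'(d_j)=\pi(d_j)$ for all $j<k$, then $d_k$'s coalition in $\pi'$ is confined to $R_k$, where $\pi(d_k)$ is again the unique optimum, so Pareto domination and uniqueness force $\pi'(d_k)=\pi(d_k)$. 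Since the dictators' coalitions partition $N$, this yields $\pi'=\pi$, contradicting that some agent is strictly better off in $\pi'$.

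The step I expect to be the main obstacle is the structural characterization, because it is where the online arrival dynamics must be reconciled with the offline serial-dictatorship optimum. The subtle point is to show that every agent in $R_k\setminus\{d_k\}$ in fact arrives \emph{after} $d_k$: any earlier-arriving agent is placed before $d_k$ is even created, hence lands in the coalition of some dictator preceding $d_k$ and thus lies outside $R_k$. Once this is established, the arrival-order grabbing rule exactly implements the greedy choice---the first dictator to want an agent is the earliest-arriving dictator who wants it, and $d_k$ collects precisely those later agents it wants that no earlier dictator wanted---so the online process realizes the unique optimum over $R_k$. I would isolate this as an invariant maintained across arrivals and prove it by induction on the arrival step, paying attention to the indexing of dictators versus agents. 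Throughout, strictness is doing double duty: it removes indifferences so that each dictator's optimal coalition is unique (ruling out a Pareto improvement that merely reshuffles equally-valued agents), and it gives the clean sign dichotomy that makes the greedy rule optimal.
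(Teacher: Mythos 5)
Your proposal is correct and takes essentially the same approach as the paper: the identical algorithm (the newcomer joins the earliest-created coalition whose founding agent strictly likes it), and the same serial-dictatorship argument that, by strictness, each dictator receives its \emph{unique} best coalition among the agents not claimed by earlier dictators, which forces any weakly dominating partition to coincide with the output coalition by coalition. The only difference is organizational: the paper inducts on the number of coalitions by restricting the game to $N\setminus C_1$ and reusing the algorithm's self-similarity, whereas you prove the residual-set characterization $\pi(d_k)=\{d_k\}\cup\{a\in R_k\colon u_{d_k}(a)>0\}$ directly and induct over dictators---the same argument in slightly more explicit form.
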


\begin{proof}
	Consider a strict ASHG with agent set $N = \{a_i\colon 1\le i \le n\}$ and arrival order $a_1,\dots, a_n$.
	Apply \Cref{alg:POstrictASHG} to compute a partition $\pi$.
	This algorithm proceeds as follows:
	Whenever a new agent arrives, we ask for each existing coalition whether the first agent in that coalition has a positive utility for the new agent.
	If such a coalition exists, the new agent joins the coalition among those that was created first.
	Otherwise, the algorithm starts a new coalition with the new agent.
	
\begin{algorithm}
  \caption{Pareto-optimal partition of online strict ASHG}
  \label{alg:POstrictASHG}
  \begin{flushleft}
    \textbf{Input:} Strict ASHG\\
    \textbf{Output:} Pareto-optimal partition $\pi$
  \end{flushleft}

  \begin{algorithmic}[]
\STATE $\pi\leftarrow \emptyset$, $k\leftarrow 0$
  \FOR{$i = 1,\dots, n$}
  \IF {$\{j\in [k]\colon u_{\ell_j}(a_i)>0\}\neq \emptyset$}
	\STATE $j^* \leftarrow \min_{j\in [k]}\{u_{\ell_j}(a_i)>0\}$
	\STATE $\pi \leftarrow \pi\setminus \{C_{j^*}\}\cup \{C_{j^*}\cup\{a_i\}\}$
  \ELSE
	\STATE $k\leftarrow k+1$
	\STATE $C_k\leftarrow \{a_i\}$, $\ell_k\leftarrow a_i$
	\STATE $\pi\leftarrow \pi\cup \{C_k\}$
  \ENDIF
	\ENDFOR
  \RETURN $\pi$
 \end{algorithmic}
\end{algorithm}

	For the proof, we use the notation from the algorithm and assume that $\pi = \{C_i\colon 1 \le i \le m\}$ for some $m>0$.
	Assume further that these coalitions were formed in the order $C_1,\dots,C_m$ and that agent $\ell_j$ was the first agent in coalition $C_j$ for all $j\in [m]$.
	The algorithm fulfills the property that, for all $j\in [m]$, and agents $x\in C_j\setminus\{\ell_j\}$, it holds that $u_{\ell_j}(x) > 0$.
	We refer to this property as observation $(*)$.
	
	We are ready to prove that $\pi$ is Pareto-optimal.
	Assume that $\pi'$ is any partition such that, for all agents $x\in N$, it holds that $u_x(\pi')\ge u_x(\pi)$.
	We claim that $\pi' = \pi$.
	
	By observation $(*)$ and the design of the algorithm, agent $\ell_1$ is in their best coalition in~$\pi$ and, by the strictness of the utilities, their best coalition is unique, so $\pi'(\ell_1) = \pi(\ell_1) = C_1$.
	We call this fact observation $(**)$.
	We now prove our claim that $\pi' = \pi$ by induction over~$m$, i.e., the number of coalitions in the partition $\pi$.
	
	First, consider the case $m = 1$.
	Then, by observation $(**)$, it holds that $\pi' = \{\pi'(\ell_1)\} = \{\pi(\ell_1)\} = \pi$.

	Now, assume that $m > 1$.
	By observation $(**)$, it suffices to show that $\pi'\setminus\{C_1\} = \pi\setminus\{C_1\}$.
	Consider the ASHG restricted to the agent set $N' = \bigcup_{2\le j\le m} C_j$.
	Then, $\pi \setminus \{C_1\}$ is the output of \Cref{alg:POstrictASHG} of the restricted ASHG if the arrival order is the subsequence of the original arrival order.
	Moreover, by assumption, for all $x\in N$, it holds that $u_x(\pi'\setminus\{C_1\})\ge u_x(\pi\setminus\{C_1\})$.
	Hence, by induction, it holds that $\pi'\setminus\{C_1\} = \pi\setminus\{C_1\}$, as desired.
	This shows that $\pi' = \pi$.
	
	Consequently, no partition exists that Pareto-dominates $\pi$, and therefore $\pi$ is Pareto-optimal.
\end{proof}

Since every Pareto-optimal partition is also a CIS partition, we obtain the following corollary.

\begin{corollary}\label{cor:strictCIS}
	There exists a deterministic online algorithm, which always outputs a CIS partition for strict ASHGs.
\end{corollary}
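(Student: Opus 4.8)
The plan is to derive the statement directly from \Cref{thm:algPOstrict} together with the logical implication that Pareto optimality is a stronger requirement than contractual individual stability (cf.\ \Cref{fig:solution-concepts}). Concretely, I would reuse the very algorithm from \Cref{thm:algPOstrict}, namely \Cref{alg:POstrictASHG}, which is a deterministic online algorithm that always outputs a Pareto-optimal partition for strict ASHGs, and argue that its output is automatically CIS. Since that algorithm already respects the online constraints, no new construction is needed; the entire content of the corollary reduces to the implication ``PO $\Rightarrow$ CIS''.

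The single substantive step is therefore to verify that every Pareto-optimal partition is contractually individually stable, which I would prove by contraposition. Suppose a partition $\pi$ admits a contractual individual deviation by some agent $i$, yielding a partition $\pi'$. By definition of a single-agent deviation, only the coalition $i$ leaves and the coalition $i$ joins are altered, so every agent outside these two coalitions retains the same coalition and hence the same utility. The individual condition guarantees $u_j(\pi') \ge u_j(\pi)$ for every $j \in \pi'(i)\setminus\{i\}$ in the welcoming coalition, the contractual condition guarantees the same for every $j \in \pi(i)\setminus\{i\}$ in the abandoned coalition, and the deviation being a Nash deviation gives $u_i(\pi') > u_i(\pi)$. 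Consequently $u_j(\pi') \ge u_j(\pi)$ for all agents $j$ with strict inequality for $i$, so $\pi'$ Pareto-dominates $\pi$ and $\pi$ is not Pareto-optimal. Taking the contrapositive yields that a Pareto-optimal partition admits no contractual individual deviation, i.e., it is CIS.

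I do not expect a genuine obstacle here, as the result is a corollary. The only point requiring care is the bookkeeping of which agents change coalitions under a contractual individual deviation, so as to confirm that its defining ``weakly better for the two affected coalitions, unchanged elsewhere, strictly better for the deviator'' structure matches exactly the definition of Pareto domination. Once this is spelled out, the implication is immediate, and combining it with \Cref{thm:algPOstrict} completes the argument.
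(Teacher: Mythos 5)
Your proposal is correct and follows exactly the paper's route: the paper derives the corollary in one line from \Cref{thm:algPOstrict} together with the standard implication that every Pareto-optimal partition is CIS (cited via \Cref{fig:solution-concepts}). Your explicit contrapositive verification of that implication---a CIS deviation produces a Pareto-dominating partition---is a correct spelling-out of the fact the paper takes as known, so there is nothing to add.
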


Of course, \Cref{thm:algPOstrict} and \Cref{cor:strictCIS} work for subclasses of ASHGs like FEGs, AFGs, and AEGs. 
Interestingly, however, the situation changes once we allow for a utility of $0$.
It becomes impossible to compute CIS partitions, and thus PO partitions, online, and this already holds for symmetric FENGs.
This is a clear contrast to offline hedonic games, where PO (and CIS) partitions are guaranteed to exist without any restriction on the game.

\begin{proposition}\label{prop:noCISsymFENG}
	There exists no deterministic online algorithm, which always outputs a CIS partition for symmetric FENGs.
\end{proposition}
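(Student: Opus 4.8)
The plan is to reuse the adaptive, adversarial strategy behind \Cref{prop:CNSnotAFG}, but to exploit the one feature that FENGs have over FEGs, namely the neutral utility value~$0$. This is essential: by \Cref{thm:algPOstrict} and \Cref{cor:strictCIS}, CIS partitions can always be computed online for strict ASHGs, and hence for FEGs, so any counterexample \emph{must} place a~$0$ on some edge. I would fix an arbitrary deterministic online algorithm $\alg$ and feed it an instance on three agents $a,b,c$, revealed adaptively and branching on $\alg$'s choices, so that whatever $\alg$ does the final partition admits a contractual individual deviation.

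Concretely, I would first reveal $a$, then reveal $b$ with $u(a,b)=0$. The only two options for $\alg$ are to add $b$ to the coalition $\{a\}$, forming $\{a,b\}$, or to open a new singleton, yielding $\{a\},\{b\}$; these give two cases. In the first case I reveal $c$ as a friend of $a$ and an enemy of $b$, i.e.\ $u(a,c)=1$ and $u(b,c)=-1$; now $\alg$ can only add $c$ to $\{a,b\}$ or keep it apart. If it forms $\{a,b,c\}$, then $b$ (whose utility is $-1$) can depart to a singleton, and since $b$ is neutral to $a$ this leaves $u_a$ unchanged and raises $u_c$, so both abandoned members consent and we obtain a CIS deviation; if $c$ stays a singleton, then $a$ can leave its neutral partner $b$ to join its friend $c$, again with all consent conditions met. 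In the second case I reveal $c$ as a friend of both, $u(a,c)=u(b,c)=1$; whichever friendly two-agent coalition $\alg$ builds (or if it leaves $c$ alone), the agent left behind in a singleton can join the two-friend coalition, or, in the all-singleton outcome, one friend can simply join the other, in each case strictly improving while the neutral edge $u(a,b)=0$ guarantees that no member of the welcoming coalition is harmed.

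The main thing to get right is the verification that each exhibited move is a genuine \emph{contractual individual} deviation, satisfying \emph{both} veto conditions simultaneously: the abandoned coalition must be no worse off and the welcoming coalition must be no worse off. This is exactly where the neutral value does the work --- a member that is neutral ($u=0$) to the moving agent is indifferent to its departure or arrival, so it never blocks the deviation, whereas the strictly positive edges supply the mover its required gain and make the remaining friend strictly happier. Since $\alg$ was an arbitrary deterministic algorithm and all the utilities lie in $\{-1,0,1\}$ and are symmetric, this would show that no deterministic online algorithm can guarantee CIS on symmetric FENGs.
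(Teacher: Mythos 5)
Your proposal is correct and follows essentially the same adaptive adversary as the paper's proof: reveal $a,b$ with $u(a,b)=0$, then branch on whether $\alg$ merged them, presenting $c$ as a friend of one and enemy of the other (your labeling merely swaps $a$ and $b$ relative to the paper) or as a friend of both, and verify the resulting CIS deviations exactly as you do. No gaps; the deviation checks (mover strictly improves, abandoned and welcoming members never worse off thanks to the neutral edge) match the paper's argument.
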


\begin{figure}
	\centering
	\begin{tikzpicture}
		
		\node[draw, circle](a1) at (150:1.2){\color{white}{$v3$}};
		\node at (a1){$a$};
		\node[draw, circle](b1) at (30:1.2){\color{white}{$v3$}};
		\node at (b1) {$b$};
		\node[draw, circle](c1) at (270:1.2){\color{white}{$v3$}};
		\node at (c1) {$c$};
		
		\draw (a1) edge node[pos = 0.5, fill = white] {$0$} (b1);
		\draw (a1) edge node[pos = 0.5, fill = white]{$-1$}  (c1);
		\draw (b1) edge node[pos = 0.5, fill = white]{$1$}  (c1);

		\node[draw, circle](a2) at ($(5,0)+(150:1.2) $){\color{white}{$v3$}};
		\node at (a2){$a$};        
		\node[draw, circle](b2) at ($(5,0)+(30:1.2)$){\color{white}{$v3$}};
		\node at (b2) {$b$};       
		\node[draw, circle](c2) at ($(5,0)+(270:1.2)$){\color{white}{$v3$}};
		\node at (c2) {$c$};
		
		\draw (a2) edge node[pos = 0.5, fill = white]{$0$} (b2);
		\draw (a2) edge node[pos = 0.5, fill = white]{$1$} (c2);
		\draw (b2) edge node[pos = 0.5, fill = white]{$1$} (c2);

		\node[draw, circle](a) at ($(0,2) + (150:1.2) + (barycentric cs:a1=1,b2=1)$){\color{white}{$v3$}};
		\node at (a){$a$};
		\node[draw, circle](b) at ($(0,2) + (30:1.2) + (barycentric cs:a1=1,b2=1)$){\color{white}{$v3$}};
		\node at (b){$b$};
		
		\draw (a) edge node[pos = 0.5, fill = white]{$0$} (b);

		\draw[bend right] ($(a)+(-.5,-.3)$) edge[->] ($(0,.5) + (barycentric cs:a1=1,b1=1)$);
		\draw[bend left] ($(b)+(.5,-.3)$) edge[->] ($(0,.5) + (barycentric cs:a2=1,b2=1)$);
		
	\end{tikzpicture}
	\caption{Adversarial FENGs for computing CIS partitions.
	Every deterministic algorithm fails for one of the two possible instances.}
	\label{fig:noCISalg}
\end{figure}

\begin{proof}
	Assume for contradiction that $\alg$ always outputs a CIS partition for symmetric FENGs.
	Consider the following two AFGs $(N,u_1)$ and $(N,u_2)$ with identical agent set $N = \{a,b,c\}$ and symmetric utilities $u_1(a,b) = u_2(a,b) = 0$, $u_1(a,c) = -1$, $u_1(b,c) = 1$, and $u_2(a,c) = u_2(b,c) = 1$.
	Consider the arrival order $a$, then $b$, then~$c$.
	Before the arrival of $c$, $\alg$ cannot distinguish, whether the game will be $(N,u_1)$ or $(N,u_2)$.
	\Cref{fig:noCISalg} depicts the situation.

	If, at the arrival of $b$, $\alg$ forms $\{a,b\}$, assume that we are in game $(N,u_1)$. 
	Then, $\{\{a,b\},\{c\}\}$ is not a CIS partition, because $b$ has a CIS deviation to join agent $c$.
	However, forming $\{a,b,c\}$ does not lead to a CIS partition either, because then agent $a$ has a CIS deviation to form a singleton coalition.
	
	If, however, $\alg$ forms two singleton coalitions for $a$ and $b$, then consider $(N,u_2)$. 
	If $\alg$ forms $\{a,c\}$ or $\{c\}$, then $b$ has a CIS deviation to join this coalition. 
	Finally, if $\alg$ forms $\{b,c\}$, then $a$ has a CIS deviation to join.
\end{proof}

Similar to the previous section, we can extend this result to randomized algorithms.

\begin{restatable}{theorem}{CISImpossible}\label{thm:CISImpossible}
	Let $\alg$ be any randomized online algorithm for symmetric FENGs.
	Then, it holds that $W_{\textnormal{CIS}}(\alg) = 0$.
\end{restatable}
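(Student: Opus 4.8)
The plan is to derandomize via Yao's principle, exactly mirroring the structure of the proof of \Cref{thm:CNSImpossible}. The idea is to build a random FENG gadget on which every \emph{deterministic} online algorithm fails to produce a CIS partition with probability bounded below by some constant, and then to concatenate $k$ independent copies so that the success probability decays like $c^{-k}$. Since $k$ is arbitrary, this forces $W_{\textnormal{CIS}}(\alg) = 0$ for every randomized algorithm.

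First I would design the random gadget from the two branches in the proof of \Cref{prop:noCISsymFENG}. Each gadget uses three agents $a_i, b_i, c_i$ arriving in that order, with $u(a_i,b_i) = 0$ fixed. The crux of \Cref{prop:noCISsymFENG} is that the algorithm must commit, upon seeing only $a_i$ and $b_i$ (whose mutual utility is $0$), to either placing them together or keeping them apart, and whichever it does, there is a choice of $c_i$'s utilities that defeats it. So I would let the random instance, with probability $\nicefrac{1}{2}$ each, realize one of the two adversarial continuations: the branch with $u(a_i,c_i) = -1$, $u(b_i,c_i) = 1$ (which punishes forming $\{a_i,b_i\}$), or the branch with $u(a_i,c_i) = u(b_i,c_i) = 1$ (which punishes separating $a_i$ and $b_i$). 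All cross-gadget utilities are set to $0$ (or to $-1$, whichever keeps gadgets from interacting; with FENG values $\{1,0,-1\}$ available, $0$ is the natural neutral choice). Because the decision on $\{a_i,b_i\}$ is made before $c_i$'s type is revealed, and the two types are each likely with probability $\nicefrac{1}{2}$, every deterministic algorithm fails on a single gadget with probability at least $\nicefrac{1}{2}$.

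Next I would argue the concatenation step. With gadgets $N_i = \{a_i,b_i,c_i\}$ for $i \in [k]$ arriving in blocks $N_1, \dots, N_k$, and the inter-gadget utilities chosen to be neutral, I need a lemma of the form: if $\alg(G)$ is CIS then $\pi[N_i]$ is CIS for $G[N_i]$ for every $i$. This requires checking that a CIS deviation available \emph{within} a gadget in the restricted instance is also a CIS deviation in the full instance---that is, that neutral cross-gadget agents neither veto the deviation (they are indifferent, so consent is automatic under the $\ge$ conditions in both the individual and contractual constraints) nor alter the deviating agent's utility change. Because utilities toward other gadgets are $0$, the deviating agent's utility difference and all the consent inequalities depend only on agents inside $N_i$, so the projection argument goes through. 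Combined with independence of the per-gadget randomizations, the probability that $\alg$ succeeds on all gadgets is at most the product $2^{-k}$ of the per-gadget failure complements.

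The main obstacle I anticipate is the cross-gadget interaction analysis: I must verify carefully that the neutral ($0$-valued) inter-gadget edges do not inadvertently create or destroy CIS deviations when passing between the full instance and the restricted instance $G[N_i]$, and in particular that indifferent outside agents count as consenting (not as vetoing) under both the welcoming-coalition and abandoned-coalition conditions of CIS. A secondary subtlety is confirming, as in \Cref{thm:CNSImpossible}, that the online arrival order genuinely forces $\alg$'s behavior on block $N_i$ to be no better than the best single-gadget algorithm---this follows because each gadget completes before the next begins, so the adversary's revelation of $c_i$'s type happens while $\alg$ is still committed to its placement of $a_i, b_i$. Once these checks are in place, Yao's principle yields that no randomized algorithm succeeds with probability exceeding $2^{-k}$ on every deterministic symmetric FENG, and letting $k \to \infty$ gives $W_{\textnormal{CIS}}(\alg) = 0$.
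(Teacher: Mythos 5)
Your proposal is correct and follows essentially the same route as the paper's proof: $k$ independent random three-agent gadgets built from the two adversarial branches of \Cref{prop:noCISsymFENG}, zero cross-gadget utilities, a projection lemma showing CIS on the full instance implies CIS on each restricted gadget (with indifferent outsiders consenting automatically), a $2^{-k}$ product bound by independence, and Yao's principle. The only cosmetic difference is the gadget randomization---the paper fixes $u(a_i,c_i)=1$ and randomizes $u(b_i,c_i)\in\{1,-1\}$, which is your construction up to swapping the roles of $a_i$ and $b_i$.
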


\begin{proof}
	Let $k \in \mathbb{N}$ be a positive integer.
	We define the random FENG $G = (N, u)$ where $N = \bigcup_{i\in [k]}N_i$ for $N_i = \{a_i,b_i,c_i\}$.
	The random utilities are given by $u(a_i, b_i) = 0$ and $u(a_i, c_i) = 1$, and we randomize uniformly between $u(b_i, c_i) = 1$ and $u(b_i, c_i) = -1$.
	All other utilities are set to~$0$.
	The randomization for the utilities within $N_i$ and $N_j$ for $1\le i < j\le k$ are performed independently.
	The agents arrive in the order $N_1, \dots, N_k$ and within $N_i$, first $a_i$, then~$b_i$, then $c_i$.
	Hence, for $i \in [k]$, $G[N_i]$ is one of the two FENGs from \Cref{prop:noCISsymFENG} selected by an unbiased coin flip.
	
	Now, let $\alg$ be an arbitrary deterministic algorithm for symmetric FENGs and define $\pi:= \alg(G)$. 
	By the proof of \Cref{prop:noCISsymFENG}, for every $i\in [k]$, $\alg$ fails with probability $\nicefrac{1}{2}$ on $G[N_i]$.
	
	Moreover, by design of the random instance, if $\alg$ computes a CIS partition on $G$, then, for all $i\in [k]$, $\pi[N_i]$ is CIS for $G[N_i]$.
	Indeed,	assume that $\pi$ is CIS but there exists $i\in[k]$ such that $\pi[N_i]$ is not CIS for $G[N_i]$.
	Then, some agent $x \in N_i$ has a CIS deviation in $G^k[N_i]$ with respect to $\pi[N_i]$.
	However, since the utilities of $x$ to all agents in $N \setminus N_i$ are $0$, they permit $x$ to leave or join their respective coalitions and do not influence the utility change of~$x$.
	Therefore, $\pi$ is not CIS, a contradiction.

	Now, observe that, by the arrival sequence of the agents, the performance of $\alg$ on $N_i$ is at most as good as the performance of the best algorithm for $G[N_i]$.
	Therefore, using the independence of the random selection of the utilities, the probability that $\alg$ computes a CIS partition on $G$ is bounded by the product of the probabilities that $\pi[N_i]$ is CIS for $G[N_i]$.
	Hence, $\pi$ is a CIS partition with probability at most $2^{-k}$.
	
	By Yao's principle, no randomized algorithm can compute a CIS partition with probability more than $2^{-k}$ for every (deterministic) symmetric FENG.
	Since $k$ is chosen arbitrarily, this proves the assertion.
\end{proof}

\subsection{Individual Stability}
As a last single-deviation stability concept, we consider individual stability, which is a strengthening of contractual individual stability and the complementary (but logically incomparable) notion of contractual Nash stability, where each agent in the welcoming (instead of abandoned) partition has the power to veto a single-agent deviation.
Then, even for the combination of symmetry and severe utility restrictions, online algorithms fail to be able to compute IS partitions.
Note that reasonable classes of games contain at least one negative and one positive utility value.
Otherwise, the partition consisting of the grand coalition containing all agents or the partition consisting of singleton coalitions is stable.
By contrast, we show next that computing IS partitions becomes difficult when any positive and any negative utility is present.

Compared to the proofs of \Cref{thm:CNSImpossible,thm:CISImpossible}, simply concatenating identical games by negative utilities can be problematic for some utility values. 
For instance, if the positive utility is sufficiently large compared to the negative utility (e.g., in AFGs), then the grand coalition is IS if each agent has a positive utility for some other agent. 
Instead, we prove the statement by considering one large random adversarial instance for deterministic algorithms and apply Yao's principle once again.

\begin{restatable}{theorem}{noISgen}\label{thm:noISgen}
	Let $x, y > 0$ and let $\alg$ be any randomized online algorithm for symmetric $\{-y,x\}$-ASHGs, symmetric AFGs, or symmetric AEGs.
	Then, $W_{\textnormal{IS}}(\alg) = 0$.
\end{restatable}

\begin{proof}
	Let $x, y > 0$.
	We will define a random adversarial symmetric $\{-y,x\}$-ASHG based on an integer $k \ge 2$ with $n = k^2 + 2$ agents.
	We will then show that computes an IS partition in this game with probability at most $\frac 1k$.
	This result holds independent of the specific values for $x$ and $y$ and we can therefore assume that these values depend on $k$ (and therefore $n$).
	Hence, the construction works in particular for AFGs and AEGs.
	
	We define the game $G = (N,u)$, which is illustrated in \Cref{fig:noISgen}.
	Let $A = \{a_i \colon 1 \le i \le k^2\}$ and $N = A\cup \{b, c\}$.
	Agents arrive in the order $a_1, \dots, a_{k^2}$, then $b$, then $c$.
	Independent of randomizations, it always holds that $u(b, c) = x$ and $u(a_i, a_j) = -y$ for all $a_i, a_j \in A$.
	The remaining utilities are selected at random as follows.
	First, we uniformly draw a random subset $B\subseteq A$ of $k$ agents. 
	Second, one agent is selected from~$B$ uniformly at random and labeled $d$.
	We set $u(c, d) = x$, $u(c, a_i) = -y$ for all $a_i \in A \setminus \{d\}$, $u(b, a_i) = x$ for all $a_i \in B$, and $u(b, a_i) = -y$ for all $a_i \in A \setminus B$.
	Therefore, $G$ is a uniformly random choice from a set of ${{k^2}\choose{k}}k$ instances.

	\begin{figure}
	\centering
	\begin{tikzpicture}
		\node[draw, circle](b) at (1.5,-1){\color{white}{$v3$}};
		\node at (1.5,-1){$b$};
		\node[draw, circle](c) at (4.5,-1){\color{white}{$v3$}};
		\node at (4.5,-1) {$c$};
		
		\foreach[count = \j] \i/\k in {0/{a_1},1/{a_2},3/{d},4.6/{a_{k^2}}}
		{
			\node[draw, circle] (a\j) at (1.5*\i,-2.5){\color{white}{$v3$}};
			\node at (1.5*\i,-2.5){$\k$};
		}
		\node at (barycentric cs:a3=1,a4=1) {\dots};
		\node at (barycentric cs:a2=1,a3=1) {\dots};
		
		\draw (b) edge node[pos = 0.5, fill = white]{$x$} (a1);
		\draw (b) edge node[pos = 0.5, fill = white]{$x$} (a2);
		\draw (b) edge node[pos = 0.5, fill = white]{$x$} (a3);
		\draw (b) edge node[pos = 0.5, fill = white]{$x$} (c);
		\draw (c) edge node[pos = 0.5, fill = white]{$x$} (a3);
		
		\draw[decorate,decoration={brace,amplitude=12pt}] ($(a3)+(.5,-.3)$) -- ($(a1)+(-.5,-.3)$) node[midway, below,yshift=-12pt,]{$B$};
	\end{tikzpicture}
	\caption{Adversarial instance for achieving individual stability in $\{-y,x\}$-ASHGs for $x,y>0$.
		We only depict the positive utilities of $x$. All remaining utilities are $-y$.}
	\label{fig:noISgen}
\end{figure}

	Before we bound the probability of a deterministic algorithm for forming an IS partition, we determine the IS partitions in the obtained instance dependent on $x$ and $y$.
	
	\begin{restatable}{claim}{ISclaim}\label{cl:ISparitions}
		Let $\emptyset \neq S \subseteq B \setminus \{d\}$.
		The following are individually stable partitions in $G$:
		\begin{itemize}
			\item $\{\{b, c, d\}\} \cup \{\{a_i\}\colon a_i \in A \setminus \{d\}\}$ for all $x,y > 0$,
			\item $\{\{c, d\}, \{b\} \cup S\} \cup \{\{a_i\}\colon a_i \in A \setminus \left(\{d\} \cup S\right) \}$ if $2 \le |S| \le \frac xy +1$, and 
			\item $\{\{b, c, d\} \cup S\} \cup \{\{a_i\}\colon a_i \in A \setminus \left(\{d\} \cup S\right) \}$ if $|S| \le \frac xy -1$ 
		\end{itemize}
		Moreover, no other partition is individually stable.
	\end{restatable}

	\renewcommand\qedsymbol{$\vartriangleleft$} 
	\begin{proof}
		Let $\pi$ be an IS partition for $G$.
		Independently of $x$ and $y$, it has to hold that all agents in $A \setminus B$ are in singleton coalitions, and all agents in $B$ are either in a coalition with $b$, or in a coalition with $c$, or they are in a singleton coalition as well.
		We disregard the agents in $A \setminus B$ from consideration for the rest of the proof.
		Another important observation is that whenever a pair of agents has positive utility, then agent~$b$ is involved in all but one case.
		We base the proof on a case distinction depending on $\pi(b)$.
		
		First, it holds that $\pi(b) \neq \{b\}$, as otherwise every agent in $B \setminus \{d\}$ has an IS deviation to join~$\{b\}$, a contradiction.
		Second, it holds that $\pi(b) \neq \{b, c\}$ and $\pi(b) \neq \{b, d\}$, as otherwise agent~$d$ or $c$ respectively have an IS deviation to join $\pi(b)$.
		
		Now, assume that there exists $a_i \in B \setminus \{d\}$ such that $\pi(b) = \{a_i, b\}$.
		Then, $\{c, d\}\notin \pi$, as otherwise agent $b$ has an IS deviation to join $\{c, d\}$.
		Hence, $c$ and $d$ must be in singleton coalitions as they have IS deviations to leave any other coalition to form a singleton coalition.
		However, then, they both have the IS deviation to join each other, which is a contradiction. Hence, $\pi(b) \neq \{a_i, b\}$.
		
		Next, assume that there exists $\emptyset \neq S \subseteq B \setminus \{d\}$ such that $\pi(b) = S \cup \{b, c\}$.
		Then, agent~$c$ has an IS deviation to join $\{d\}$ if this coalition exists.
		Otherwise, $d$ has an IS deviation to leave their coalition and form a singleton coalition.
		Hence, this case is also excluded.
		
		Now, assume that there exists $\emptyset \neq S \subseteq B \setminus \{d\}$ such that $\pi(b) = S \cup \{b, d\}$.
		Then, agent $d$ has an IS deviation to join $\{c\}$ if this coalition exists.
		Otherwise, $c$ has an IS deviation to leave their coalition and form a singleton coalition.
		
		So far, we have excluded several cases in which no IS partition is possible.
		We conclude that $b$ must be in a coalition of size at least $3$ and that the coalition of $b$ must contain either both $c$ and $d$ or none of them.
		In the remaining cases, we find some IS partitions.
		
		First, consider the case where $\pi(b) = \{b, c, d\}$.
		Then $\pi(a_i) = \{a_i\}$ for all $a_i\in A$, and the resulting partition is an IS partition.
		This proves that the first partition of the claim is an IS partition.
		
		Next, assume that there exists $\emptyset \neq S \subseteq B \setminus \{d\}$ such that $\pi(b) = S \cup \{b\}$.
		Then, all agents $a_i \in A \setminus \left(S \cup \{d\}\right)$ must be in singleton coalitions, as otherwise they have an IS deviation to form a singleton coalition.
		This only leaves agents $c$ and $d$, and we conclude that $\{c,d\}\in \pi$, as otherwise, they have an IS deviation to join each other.
		Finally, the partition is only IS if $|S|\le \frac xy+1$. 
		Otherwise, 
		all agents $a_i \in S$ have a utility of $x - (|S| - 1) y < 0$ and thus can perform an IS deviation to form a singleton coalition.
		In addition, we need $|S|\ge 2$ as otherwise $b$ performs a deviation to join $\{c,d\}$.
		
		Moreover, for $2\le |S|\le \frac xy+1$, the partition $\{\{c, d\}, \{b\} \cup S\} \cup \{\{a_i\}\colon a_i \in A \setminus \left(\{d\} \cup S\right) \}$ can be shown to be individually stable:
		Clearly, none of the agents in $A$ in singleton coalitions can enter $\pi(b)$ because this would be blocked by agents in $S$.
		Similarly, they are also blocked to enter other singleton coalitions.
		Agents $b$ and $c$ are also blocked to join any other coalition.
		Next, agents in $S$ cannot improve by performing any deviation.
		Finally, because $|S|\ge 2$, $b$ cannot improve by joining any other coalition.
		
		Together, for $\emptyset \neq S \subseteq B \setminus \{d\}$, the partition $\{\{c, d\}, \{b\} \cup S\} \cup \{\{a_i\}\colon a_i \in A \setminus \left(\{d\} \cup S\right) \}$ is an IS partition if and only if $2 \le |S| \le \frac xy +1$.
		
		Finally, assume that there exists  $\emptyset \neq S \subseteq B \setminus \{d\}$ such that $\pi(b) = S \cup \{b, c, d\}$.
		Then, as before, all agents $a_i \in A \setminus \left(S \cup \{d\}\right)$ must be in singleton coalitions.
		Moreover, the partition only is individually stable if $|S| \le \frac xy -1$.
		Otherwise, all agents $a_i \in S$ have a utility of $x - (|S|+1) \cdot y < 0$ and can thus perform an IS deviation to form a singleton coalition.
		Agents $b$, $c$, and $d$ all have two friends and thus no IS deviation when $|S| \le \frac xy -1$.
		This proves that  for $\emptyset \neq S \subseteq B \setminus \{d\}$, the partition $\{\{b, c, d\} \cup S\} \cup \{\{a_i\}\colon a_i \in A \setminus \left(\{d\} \cup S\right) \}$ is individually stable if and only if $|S| \le \frac xy -1$.
		
		As this case distinction covers all possible partitions, we have found all IS partitions as stated in the assertion.
	\end{proof}
	\renewcommand\qedsymbol{$\square$}

	Now, let $\alg$ be any deterministic online algorithm and define $\pi := \alg(G)$.
	To conclude the proof, we show that $\pi$ is an IS partition with probability at most $\frac{1}{k}$.
	
	Note that $G[A]$ is identical independent of the randomization for the instance.
	We perform a case distinction based on $\pi[A]$.
	Intuitively, $\alg$ can either attempt to reach the IS partition $\{\{b, c, d\}\} \cup \{\{a_i\}\colon a_i \in A \setminus \{d\}\}$ by forming all singleton coalitions, or it forms a single coalition of size greater than one to reach any of the other IS partitions.
	In all other cases, the algorithm can no longer reach an IS partition.

	Assume first that $\pi[A]$ contains exactly one coalition of size strictly larger than one.
	Note that then $\pi\neq \{\{b, c, d\}\} \cup \{\{a_i\}\colon a_i \in A \setminus \{d\}\}$ and it can only create an IS partition for the latter two cases of \Cref{cl:ISparitions}.
	Let $S \subseteq A$ be the coalition of size strictly larger than $1$ in $\pi[A]$.
	Then, $\pi$ can only be individually stable if $S \subseteq B$ for the random set $B$, particularly $|S| \le k$.
	As $B$ is chosen uniformly at random from $A$, the probability that $S \subseteq B$ is $\frac{\binom{k}{|S|}}{\binom{k^2}{|S|}}$.
	This is true because there are $\binom{k^2}{|S|}$ choices to select $S$ and $\binom{k}{|S|}$ of these choices result in $S\subseteq B$.
	We compute
	\begin{align*}
		\frac{\binom{k}{|S|}}{\binom{k^2}{|S|}} = & \frac{k!|S|!(k^2-|S|)!}{k^2!|S|!(k - |S|)!} 
		= \frac{k (k - 1) \cdots (k - |S| + 1)}{k^2 (k^2 - 1) \cdots (k^2 - |S| + 1)} \\
		= & \frac{1}{k}\frac{k-1}{k^2-1}\cdots \frac{k - |S| + 1}{k^2 - |S| + 1} < \frac{1}{k}\text.
	\end{align*}
	
	Therefore, $\alg$ successfully forms an IS partition with probability at most $\frac{1}{k}$ in this case.
	
	Next, assume that $\pi[A]$ contains only singleton coalitions.
	By \Cref{cl:ISparitions}, $\alg$ only outputs an IS partition if $\pi = \{\{b, c, d\}\} \cup \{\{a_i\}\colon a_i \in A \setminus \{d\}\}$.
	Therefore, when $b$ arrives, the set $B$ is revealed to the algorithm, and it needs to match $b$ to $d$, as otherwise $\{b,c,d\}$ cannot be formed.
	The probability for this event is precisely $\frac{1}{k}$ since each element in $B$ is $d$ with equal probability.
	Hence, also in this case, $\pi$ is IS with probability at most $\frac{1}{k}$.
	
	Together, $\pi$ is an IS partition with probability at most $\frac 1k$.
	By Yao's principle, no randomized algorithm can compute an IS partition with probability more than $\frac 1k$ for every (deterministic) symmetric $\{-y,x\}$-ASHGs.
	Since our choice of $k$ is arbitrary, the assertion follows.
\end{proof}

\subsection{Group Stability}

Finally, we consider group stability, that is, computing partitions in the core and strict core.
Since partitions in the strict core are individually stable, \Cref{thm:noISgen} already suggests computational difficulties for achieving partitions in the strict core.
There is, however, a caveat.
For some parameters of $x$ and $y$, the instances considered in \Cref{thm:noISgen} have an empty strict core.
Hence, an online algorithm for these instances is not worse than the best offline algorithm (see also \Cref{fn:onlinevsoffline}).
In this section, we therefore restrict attention to instances containing partitions in the strict core.
Note that this assumption is, for instance, trivially true for AFGs, in which partitions in the strict core are guaranteed to exist \citep{DBHS06a}. 

We now prove a negative result encompassing both the core and the strict core.
The proof is similar to the proof of \Cref{thm:noISgen}, but a suitable set of agents arrives instead of the single agent $c$.
In addition, we can simplify the instances from \Cref{thm:noISgen} because we can omit agents in $A\setminus B$.

\begin{theorem}\label{thm:core}
	Let $x, y > 0$ and $\alg$ be any randomized online algorithm for symmetric $\{-y,x\}$-ASHGs (or symmetric AEGs) that contain partitions in the strict core.
	Then, it holds that $W_{\textnormal{CR}}(\alg) = 0$ and $W_{\textnormal{SCR}}(\alg) = 0$.
\end{theorem}

\begin{proof}
	Let $x, y > 0$ and define $q = \min\left\{q'\in \mathbb N\colon q'> \frac xy\right\}$.
	Note that $q$ is a fixed parameter of our construction that only depends on~$x$ and~$y$.
	
	We will define a random adversarial symmetric $\{-y,x\}$-ASHG $G = (N,u)$ based on $q$ as well as an integer $k \ge 2$ with $n = k + q + 1$ agents.
	\Cref{fig:noCORE} illustrates the game.
	Let $A = \{a_i \colon 1 \le i \le k\}$, $C = \{c_i\colon 1\le i\le q\}$ and $N = A\cup C\cup \{b\}$, where $b$ is an additional agent.
	Agents arrive in the order $a_1, \dots, a_{k}$, then $b$, then $c_1,\dots, c_q$.
	We define
	\begin{itemize}
		
		\item for all $i\in [q]$, $u(b, c_i) = x$,
		\item for all $i\in [k]$, $u(b, a_i) = x$,
		\item for all $i,j\in [q]$ with $i\neq j$, $u(c_i, c_j) = x$, and
		\item for all $i,j\in [k]$ with $i\neq j$, $u(a_i, a_j) = -y$.
	\end{itemize}
	
	The remaining utilities are selected at random as follows.
	We uniformly draw an index $j^*\in [k]$.
	Then, for all $i\in [q]$ and $j\in [k]$ with $j\neq j^*$, we set $u(c_i, a_{j^*}) = x$ and $u(c_i, a_j) = -y$, i.e., agent $a_{j^*}$ has positive utility for all agents in $C$ while all other agents in $A$ have negative utilities.
	Therefore, $G$ is a uniformly random choice from a set of $k$ instances.

	\begin{figure}
	\centering
	\begin{tikzpicture}
		\node[draw, circle](b) at (1.5,-1){\color{white}{$v3$}};
		\node at (1.5,-1){$b$};
		\node[draw, circle](c1) at (4.5,-1){\color{white}{$v3$}};
		\node at (4.5,-1) {$c_1$};
		\node[draw, circle](c2) at (6.5,-1){\color{white}{$v3$}};
		\node at (6.5,-1) {$c_q$};
		
		\foreach[count = \j] \i/\k in {0/{a_1},1/{a_2},3/{a_{j^*}}}
		{
			\node[draw, circle] (a\j) at (1.5*\i,-2.5){\color{white}{$v3$}};
			\node at (1.5*\i,-2.5){$\k$};
		}
		\node at (barycentric cs:c1=1,c2=1) {\dots};
		\node at (barycentric cs:a2=1,a3=1) {\dots};
		
		\draw (b) edge node[pos = 0.5, fill = white]{$x$} (a1);
		\draw (b) edge node[pos = 0.5, fill = white]{$x$} (a2);
		\draw (b) edge node[pos = 0.5, fill = white]{$x$} (a3);
		\draw (b) edge node[pos = 0.5, fill = white]{$x$} (c1);
		\draw (c1) edge node[pos = 0.5, fill = white]{$x$} (a3);
		\draw (c2) edge node[pos = 0.5, fill = white]{$x$} (a3);
		\draw[bend left] (c1) edge node[pos = 0.3, fill = white]{$x$} (c2);
		\draw[bend left] (b) edge node[pos = 0.5, fill = white]{$x$} (c2);
		
	\end{tikzpicture}
	\caption{Adversarial instance for achieving partitions in the (strict) core in $\{-y,x\}$-ASHGs for $x,y>0$.
	All such instances have a nonempty strict core.
		We only depict the positive utilities of~$x$. All remaining utilities are~$-y$.}
	\label{fig:noCORE}
\end{figure}

First, we ensure that $G$ always contains partitions in the strict core.
Define the partition $\pi^* = \{C\cup \{a_{j^*}, b\}\} \cup \{\{a_j\}\colon 1\le j \le k, j\neq j^*\}$.

\begin{claim}\label{cl:StrCore}
	The partition $\pi^*$ is in the strict core.
\end{claim}
	\renewcommand\qedsymbol{$\vartriangleleft$}
	\begin{proof}
		First, note that an agent that is part of their unique best coalition can never be part of a weakly blocking coalition.
		Hence, since this is the case for agents in $C\cup \{a_{j^*}\}$, we only have to exclude weakly blocking coalitions containing $b$ and agents in $A\setminus \{a_{j^*}\}$.
		Let $D\subseteq (A\setminus \{a_{j^*}\})\cup \{b\}$ with $b\in D$ and assume that $u_b(D) \ge u_b(\pi^*)$.
		Then, $|D\cap (A\setminus \{a_{j^*}\})|\ge q +1$, otherwise coalition $D$ lowers agent $b$'s utility compared to $\pi^*(b)$.
		Hence, for $d\in D\setminus \{b\}$, it holds that $u_d(D) = x - (|D|-2)y < x - q y < 0$, where we use that $q > \frac xy$.
		It follows that $D$ is not a weakly blocking coalition, and therefore, $b$ is not part of a weakly blocking coalition.
		However, the agents in $A\setminus \{a_{j^*}\}$ all have a negative utility for each other, and therefore, they cannot form a weakly blocking coalition.
		Hence, there is no weakly blocking coalition, and $\pi^*$ is in the strict core.
	\end{proof}
	
The previous claim shows that our considered instances have a suitable form, i.e., they contain elements in the strict core.
To continue the proof, we show that $\pi^*$ is the only partition in the strict core and even in the core.

\begin{claim}
	The partition $\pi^*$ is the unique partition in the core.
\end{claim}

\begin{proof}
	By \Cref{cl:StrCore}, we already know that $\pi^*$ is in the strict core and, therefore, in the core.
	It remains to show that the core does not contain any other partition.
	Let $\pi$ be a partition in the core.
	We will show that $\pi = \pi^*$.
	
	First, we will prove that there exists a coalition $D\in \pi$ with $C\cup \{a_{j^*}\}\subseteq D$.
	To prove this, observe that the coalition $C\cup \{a_{j^*}\}$ yields a utility of $qx$ to all its members.
	Hence, because $\pi$ does not admit a blocking coalition, there exists an agent $d\in C\cup \{a_{j^*}\}$ with $u_d(\pi)\ge qx$.
	This can only happen if the coalition of~$d$ contains at least $q$ agents for which they receive a positive utility.
	Hence if $b\notin \pi(d)$, then $C\cup \{a_{j^*}\}\subseteq \pi(d)$ and our assertion is true.
	Moreover, if there exists an agent $a\in A\setminus \{a_{j^*}\}$ with $a\in \pi(d)$, then $u_d(\pi)\ge qx$ is only possible if all $q+1$ agents for which $d$ achieves a positive utility are in $\pi$.
	Then, once again $C\cup \{a_{j^*}\}\subseteq \pi(d)$.
	
	Together, it remains to consider the case where $b\in \pi(d)$ and $\pi(d) \subseteq C\cup \{a_{j^*},b\}$.
	We are done if $\pi(d) = C\cup \{a_{j^*},b\}$.
	Otherwise, because $|\pi(d)|\ge q + 1$, there is a unique agent $d'\in (C\cup \{a_{j^*}\})\setminus \pi(d)$.
	However, forming $C\cup \{a_{j^*},b\}$ is preferred by all agents in $\pi(d)$ as well as by $d'$.
	Hence, this is a blocking coalition, contradicting the fact that $\pi$ is in the core.
	Thus, it must hold that $\pi(d) = C\cup \{a_{j^*},b\}$.
	
	Now, consider the coalition $D\in \pi$ with $C\cup \{a_{j^*}\}\subseteq D$.
	Then, for all $1\le j\le k$ with $j\neq j^*$, it holds that $a_j\notin D$.
	Otherwise, $u_{a_j}(\pi) \le x - (q+1)y < 0$ and $\{a_j\}$ would be a blocking coalition.
	
	Next, assume for contradiction that $b\notin D$.
	Note that all members in $D$ prefer $D\cup \{b\}$.
	Hence, for this not to be a blocking coalition, it must hold that $u_b(\pi)\ge x(q+1)$.
	Therefore, $|\pi(b) \cap (A\setminus \{a_{j^*}\})|\ge q+1$.
	But then, for $a\in \pi(b)\setminus \{b\}$, it holds that $u_a(\pi) \le x - qy < 0$.
	This is a contradiction.
	Hence, it must hold that $b\in D$.
	Taken together, we conclude that $D = C\cup \{a_{j^*},b\}$.
	
	For the remaining agents, i.e., for agents in $A\setminus \{a_{j^*}\}$, every nonempty coalition among themselves yields a negative utility.
	Hence, these agents have to form singleton coalitions in $\pi$.
	We conclude that $\pi^* = \pi$.
\end{proof}
	\renewcommand\qedsymbol{$\square$}
	
	Now, let $\alg$ be any deterministic online algorithm and define $\pi := \alg(G)$.
	To conclude the proof, we show that $\pi$ is in the core (and therefore in the strict core) with probability at most $\frac{1}{k}$.
	
	Recall that the first~$k$ agents to arrive are the agents in $A$. 
	Since $\pi^*$ is the only coalition in the core, we can restrict attention to the case where $\alg$ assigns all agents in $A$ to singleton coalitions.
	If $\alg$ forms a singleton coalition when $b$ arrives, then $\pi\neq \pi^*$ and $\pi$ is not in the core.
	Assume that $\alg$ forms a coalition of $b$ with an agent from $A$.
	Then, $\pi = \pi^*$ is only possible if $b$ forms a coalition with $a_{j^*}$.
	The probability for this event is exactly $\frac{1}{k}$ since $j^* = j$ for all $j\in [k]$ with equal probability.
	Hence, $\pi$ is in the core with probability at most $\frac{1}{k}$.
	
	By Yao's principle, no randomized algorithm can compute a partition in the core (and therefore in the strict core) with probability more than $\frac 1k$ for every (deterministic) symmetric $\{-y,x\}$-ASHGs containing a partition in the strict core.
	Since our choice of $k$ is arbitrary, the assertion follows.
	
	Finally, we observe that our construction also works for AEGs.
	These are $\{-y,x\}$-ASHGs, where $x = 1$ and $y$ depends on the number of agents.
	We can then simply set $q = 1$. 
	Since all our games contain at least $2$ agents, we then have that $q > \frac 1n$, i.e., our construction is valid for $x = 1$ and $y = n$.
\end{proof}

Note that the construction in the previous proof does not work for AFGs because of the dependence of $q$ on $x$: we cannot simultaneously satisfy $q > \frac n1$ and $n = k + q + 1$ when $k\ge 2$.
Instead, we provide a different construction to obtain a result for AFGs analogous to \Cref{thm:core}.

\begin{theorem}\label{thm:coreAFG}
	Let $\alg$ be any randomized online algorithm for symmetric AFGs.
	Then, it holds that $W_{\textnormal{CR}}(\alg) = 0$ and $W_{\textnormal{SCR}}(\alg) = 0$.
\end{theorem}

\begin{proof}
		Let $k\ge 3$. 
		We will define a random adversarial symmetric AFG $G = (N,u)$ with $n = k + 1$ agents.
		\Cref{fig:noCOREafg} illustrates the game.
		Let $A = \{a_i \colon 1 \le i \le k\}$ and define $N = A\cup \{b\}$, where $b$ is an additional agent.
		Agents arrive in the order $a_1, \dots, a_{k}$, then $b$.
		For all $i,j\in [k]$ with $i\neq j$, we define $u(a_i, a_j) = -1$.
	
		The remaining utilities are selected at random as follows.
		We uniformly draw an index set $J\subseteq [k]$ with $|J| = 3$, say $J = \{j_1,j_2,j_3\}$.
		We set $u(b, a_j) = n$ if $j\in J$ and $u(b, a_j) = -1$ if $j\in [k]\setminus J$.
		Hence, there are exactly three positive utilities from $b$ to a random subset of agents in $A$ and all other utilities are negative.
		Therefore, $G$ is a uniformly random choice from a set of $\binom{k}{3}$ instances.

		\begin{figure}
		\centering
		\begin{tikzpicture}
			\node[draw, circle](b) at (7.5,-1){\color{white}{$v3$}};
			\node at (7.5,-1){$b$};
		
			\foreach[count = \j] \i/\k in {1/{a_1},2/{a_2},3.5/{a_{j_1}},5/{a_{j_2}},6.5/{a_{j_3}},8/{a_{k}}}
			{
				\node[draw, circle] (a\j) at (1.5*\i,-2.5){\color{white}{$v3$}};
				\node at (1.5*\i,-2.5){$\k$};
			}
			
			\node at (barycentric cs:a2=1,a3=1) {\dots};
			\node at (barycentric cs:a3=1,a4=1) {\dots};
			\node at (barycentric cs:a4=1,a5=1) {\dots};
			\node at (barycentric cs:a5=1,a6=1) {\dots};
		
			\draw (b) edge node[pos = 0.5, fill = white]{$n$} (a3);
			\draw (b) edge node[pos = 0.5, fill = white]{$n$} (a4);
			\draw (b) edge node[pos = 0.5, fill = white]{$n$} (a5);
		
		\end{tikzpicture}
		\caption{Adversarial instance for achieving partitions in the (strict) core in AFGs.
			We only depict the positive utilities of~$n$. All remaining utilities are~$-1$.}
		\label{fig:noCOREafg}
	\end{figure}

	We start by determining the partitions in the core in the strict core.
	
	\begin{claim}\label{cl:StrCoreAFG}
		The following statements are true.
		\begin{enumerate}
			\item The partition $\{\{b\}\cup \{a_j\colon j\in J\}\}\cup \{\{a_j\}\colon j\in [k]\setminus J\}$ is in the strict core.
			\item For $J'\subseteq J$ with $|J'| = 2$, the partition $\{\{b\}\cup \{a_j\colon j\in J'\}\}\cup \{\{a_j\}\colon j\in [k]\setminus J'\}$ is in the core.
			\item No other partition is in the core.
		\end{enumerate}
	\end{claim}
		\renewcommand\qedsymbol{$\vartriangleleft$}
		\begin{proof}
			Let $\pi$ be a partition in the core.
			First note that for $j\in [k]\setminus J$, all coalitions except a singleton coalition yield a negative utility for $a_j$.
			Hence, $\{a_j\}\in \pi$.
			
			Now, let $C = \pi(b)$ and define $B := \{a_j\colon j\in J\}$.
			We already know that $C\subseteq \{b\}\cup B$.
			If $B = \{b\}$, then $\{b\}\cup B$ is a blocking coalition.
			Next, assume that $|C\cap B| = 1$, say $C = \{b,a_{j_1}\}$.
			Then, $\{b, a_{j_2}, a_{j_3}\}$ is a blocking coalition.
			Hence $\pi$ must be of the form described in the first or second case of the claim, which proves the third assertion.
			
			If $C = \{b\}\cup B$, then $b$ is in their unique most preferred coalition, and therefore not part of a weakly blocking coalition. 
			However, no coalition consisting only of agents in $A$ can be a weakly blocking coalition.
			Hence, $\pi$ is in the strict core, proving the first assertion.
			
			Finally, if $|C\cap B| = 2$, then the only coalition that is better for $b$ is $\{b\}\cup B$, which is worse for the other agents in $C$.
			Similar as before, no coalition consisting only of agents in $A$ is a (weakly) blocking coalition. 
			Hence, $\pi$ is in the core, proving the second assertion.
		\end{proof}
	\renewcommand\qedsymbol{$\square$}
	
	Now, let $\alg$ be any deterministic online algorithm for symmetric AFGs and define $\pi := \alg(G)$.
	We claim that $\pi$ is in the core (and therefore in the strict core) with probability at most $\frac{6}{k(k-1)}$.
	
	Recall that the first~$k$ agents to arrive are the agents in $A$. 
	By \Cref{cl:StrCoreAFG}, $\pi$ can only be in the core if $\alg$ forms a coalition~$C$ of size~$2$ or size~$3$ among the agents in $A$.
	Even more, $\pi$ can only be in the core if $C\subseteq B$.
	If $|C| = 2$, then $C\subseteq B$ occurs with probability $\frac{\binom{3}{2}}{\binom{k}{2}} = \frac 6{k(k-1)}$. 
	If $|C| = 3$, then $C\subseteq B$ occurs with probability $\frac{1}{\binom{k}{3}} = \frac 6{k(k-1)(k-2)}\le \frac 6{k(k-1)}$. 
	Hence, $\pi$ is in the core (and therefore in the strict core) with probability at most $\frac 6{k(k-1)}$.
	
	By Yao's principle, no randomized algorithm can compute a partition in the core (and therefore in the strict core) with probability more than $\frac 6{k(k-1)}$ for every (deterministic) symmetric AFG.
	Since our choice of $k$ is arbitrary, the assertion follows.
\end{proof}

\section{Conclusion}

In this paper, we have studied stability in online coalition formation.
We have considered stability notions based on single-agent deviations, group deviations, and Pareto optimality.
\Cref{tab:overview} in \Cref{sec:intro} displays an overview of our results.

Our positive results follow from two deterministic algorithms.
The first one outputs CNS partitions for symmetric games with utility restrictions that include FEGs and AEGs, and the second one applies to strict ASHGs and outputs PO partitions.
The latter is interesting because Pareto optimality has the flavor of both stability and optimality.\footnote{Notably, however, both algorithms may output partitions of negative social welfare for instances in which the maximum social welfare is positive, and therefore yield no approximation guarantee for maximizing social welfare.}

By contrast, we obtain negative results in the sense that no randomized algorithm can guarantee any fixed probability to output a stable partition.
Surprisingly, our negative results even encompass concepts like contractual individual stability and Pareto optimality, for which solutions exist in every hedonic game.
Hence, the online capabilities of algorithms can be severely weaker than strong offline possibilities.
Negative results naturally extend to stronger solution concepts.
For instance, a consequence of our negative results for IS and CNS is that the NS guarantee is~$0$ for all considered game restrictions.

We believe that departing from the mere consideration of welfare maximality in online coalition formation is an important step.
There is plenty of space for future research in this direction.
Possible directions include considering other solution concepts, such as fairness notions.
Moreover, it would be interesting to consider game classes that are different from ASHGs.

\section*{Acknowledgments}

This work was supported by the Deutsche Forschungsgemeinschaft under grants BR 2312/11-2 and BR 2312/12-1, and by the AI Programme of The Alan Turing Institute.
A preliminary version of this article appeared in the Proceedings of the 38th AAAI Conference on Artificial Intelligence (February 2024).
We thank Saar Cohen and the anonymous reviewers from AAAI for their helpful comments.

\vskip 0.2in

\end{document}